\documentclass[acmsmall,nonacm]{acmart}
\usepackage{amsmath}
\usepackage{amsthm}
\usepackage{amssymb}
\usepackage{wrapfig}
\usepackage[capitalize]{cleveref}
\usepackage{graphicx}
\usepackage{subcaption}
\usepackage{todonotes}
\presetkeys{todonotes}{inline}{}
\usepackage{enumerate}
\usepackage{bm}

\usepackage{booktabs}

\makeatletter
\newtheorem*{rep@theorem}{\rep@title}
\newcommand{\newreptheorem}[2]{%
\newenvironment{rep#1}[1]{%
 \def\rep@title{#2 \ref{##1}}%
 \begin{rep@theorem}}%
 {\end{rep@theorem}}}
\makeatother

\newtheorem{claim}{Claim}

\newreptheorem{lemma}{Lemma}
\newreptheorem{theorem}{Theorem}
\newreptheorem{claim}{Claim}

\newcommand{\Exp}{\mathrm{Exp}}
\newcommand{\idle}{\mathrm{idle}}
\newcommand{\busy}{\mathrm{busy}}
\newcommand{\naive}{\mathrm{naive}}

\newcommand{\p}{\mathfrak{p}}

\title{Stability for Two-class Multiserver-job Systems}
\author{Isaac Grosof}
\affiliation{%
    \institution{Carnegie Mellon University}
    \department{Computer Science Department}
    \city{Pittsburgh}
    \state{PA}
    \country{USA}}
\email{igrosof@cs.cmu.edu}
\author{Mor Harchol-Balter}
\affiliation{%
    \institution{Carnegie Mellon University}
    \department{Computer Science Department}
    \city{Pittsburgh}
    \state{PA}
    \country{USA}}
\email{harchol@cs.cmu.edu}
\author{Alan Scheller-Wolf}
\affiliation{%
    \institution{Carnegie Mellon University}
    \department{Tepper School of Business}
    \city{Pittsburgh}
    \state{PA}
    \country{USA}}
\email{awolf@andrew.cmu.edu}
\thanks{This work was supported by NSF-CMMI-1938909, NSF-XPS-1629444, and NSF-CSR-1763701 and by a Google 2020 Faculty Research Award.}
\begin{document}
\begin{abstract}
    Multiserver-job systems,
    where jobs require concurrent service at many servers,
    occur widely in practice.
    Much is known in the dropping setting,
    where jobs are immediately discarded if they require more servers than are currently available.
    However, very little is known in the more practical setting
    where jobs queue instead.

    In this paper, we derive a closed-form analytical expression for the stability region
    of a two-class (non-dropping) multiserver-job system
    where each class of jobs
    requires a distinct number of servers and requires a
    distinct exponential distribution of service time,
    and jobs are served in first-come-first-served (FCFS) order.
    This is the first result of any kind for an FCFS multiserver-job system
    where the classes have distinct service distributions.
    Our work is based on a technique
    that leverages the idea of a ``saturated'' system,
    in which an unlimited number of jobs are always available.

    Our analytical formula provides insight into the behavior of FCFS multiserver-job systems,
    highlighting the huge wastage (idle servers while jobs are in the queue)
    that can occur,
    as well as the nonmonotonic effects of the service rates on wastage.
\end{abstract}
\maketitle
\section{Introduction}

Traditional queueing theory is built on models, such as the M/G/k, where
every job occupies exactly one server,
however many servers are available.
These models have been popular for decades
because they capture the behavior of previous computing systems,
while admitting theoretical analysis.
However, traditional one-server-per-job models
are no longer representative of many modern computing systems.

Consider large-scale computing centers today,
such as those of Google, Facebook, and Microsoft.
Even though the \emph{servers} in these data centers still resemble
the \emph{servers} in traditional models such as the M/G/k,
the \emph{jobs} have changed:
these systems now by default have
jobs that require multiple servers.
For instance, in \cref{fig:cpu_requests},
we show the distribution of the number of CPUs requested
by jobs in Google's recently published trace of its ``Borg'' computation cluster
\cite{tirmazi_borg}.
The distribution is highly variable,
with jobs requesting anywhere from 1 to 100,000 normalized CPUs%
\footnote{The data was published in a scaled form \cite{tirmazi_borg}.
We rescale the data so the
smallest job in the trace uses one normalized CPU.}.
Throughout this paper, we will focus on what we call the ``multiserver-job model,''
by which we refer
to the common situation in modern systems where each job
occupies a fixed number of servers (typically more than one),
throughout its time in the system.

The multiserver-job model is fundamentally different from the one-server-per-job model.
For example, in the one-server-per-job model a work-conservation property holds,
where as long as enough jobs are present, no servers will be idle.
In the multiserver-job model work conservation is no longer guaranteed,
since a job might be forced to wait simply because
it demands more servers than are currently available,
and thus cannot ``fit,''
even though some servers are idle.
As a result, server utilization and system stability
are affected by the scheduling policy in the multiserver-job model,
unlike in a work conserving one-server-per-job model.
The multiserver-job state space is also much more complex,
rendering analysis far more difficult.

\subsection{Prior multiserver-job models}

Almost all existing work on multiserver-job systems has focused on
the dropping model,
where jobs that cannot receive service are dropped.
In a paper from 1979, \citet{arthurs_sizing}
consider this dropping model
and derive general analytical results
describing the steady state distribution.
In \cref{sec:dropping} we describe a few generalizations of
\cite{arthurs_sizing},
still within the context of the dropping model.

Unfortunately, the dropping model is unrealistic.
Large-scale systems run by companies like
Google, Facebook and Microsoft
have long queues to avoid dropping jobs,
as can be seen in Google's Borg trace \cite{tirmazi_borg}. 
Consequently,
we choose to study a
multiserver-job model which
assumes unbounded queues with no dropping.
We further assume that jobs that queue up
are served
in first-come-first-served (FCFS) order,
which is often the default used in production systems
\cite{etsion_short,sliwko_taxonomy}.

A few papers address
an FCFS multiserver-job model with more than two servers
in an analytic (non-numerical) manner
\cite{rumyantsev_stability,morozov_stability,afanaseva_stability}.
Each of these papers assumes that
all jobs have service times drawn from a
\emph{single} distribution,
regardless of the number of servers required by the job.
Having done so,
the papers derive analytical formulas
for their systems' stability regions.
Unfortunately, in real systems, jobs requiring different numbers of servers
typically also require different amounts of service time.
For instance, in \cref{fig:intro_correlation},
we show that there is a correlation between a job's number of requested CPUs
and the job's duration
for jobs
in Google's recent Borg trace \cite{tirmazi_borg}.

\begin{figure}
    \centering
    \begin{tabular}{cc}
        \cref{fig:cpu_requests} & \cref{fig:intro_correlation}\\
        \includegraphics[width=0.45\textwidth]{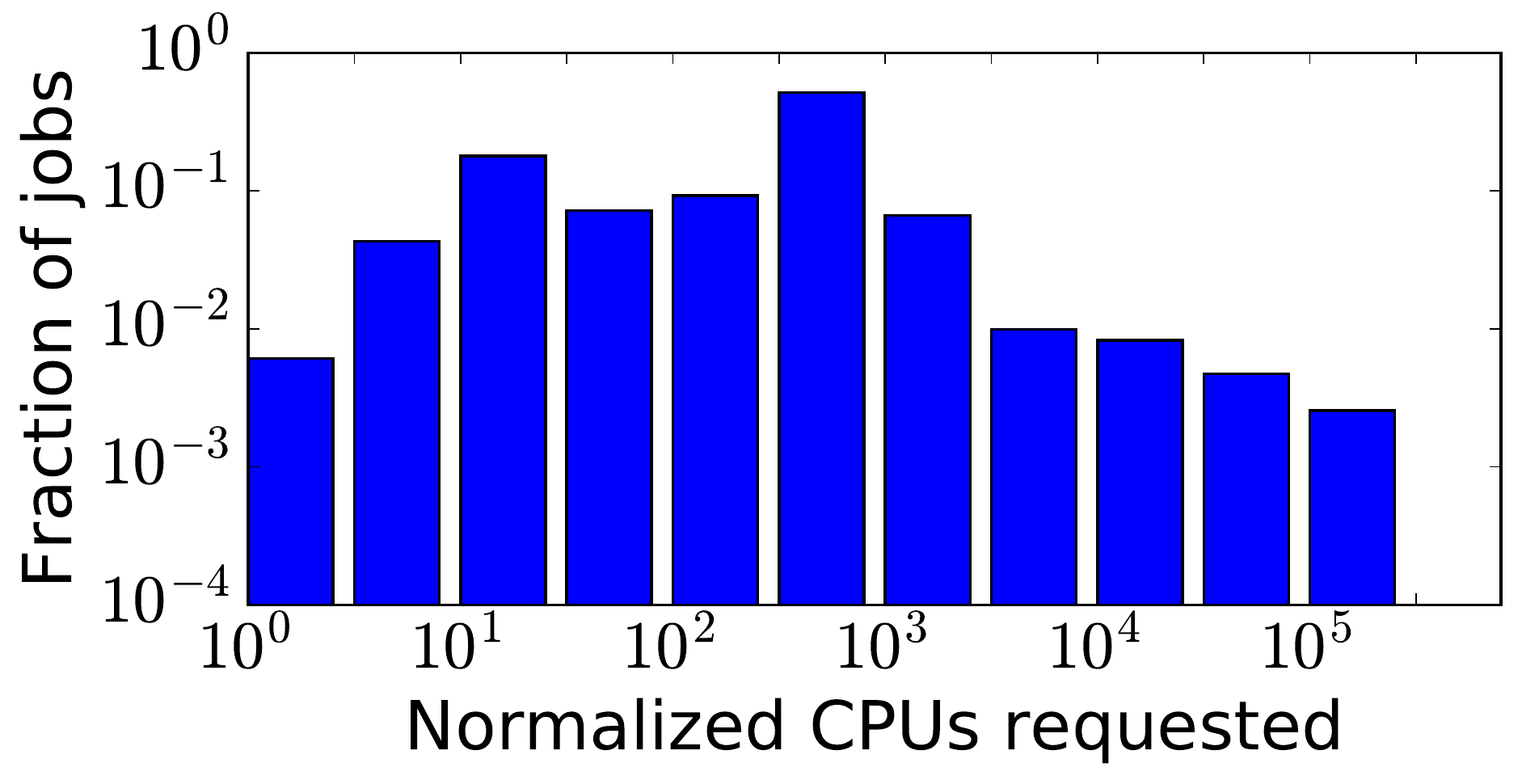} &
        \includegraphics[width=0.3\textwidth]{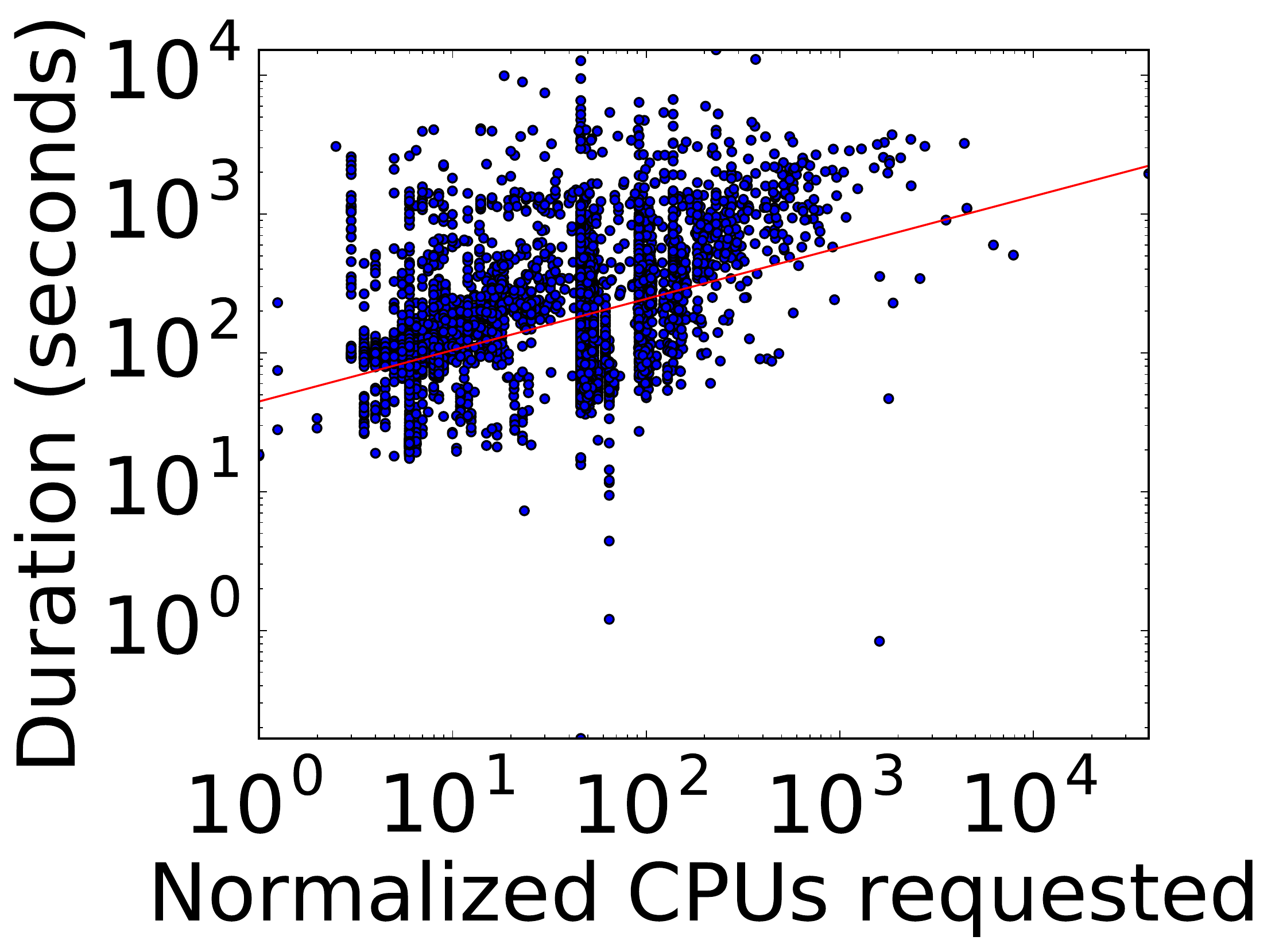}
    \end{tabular}
    \vspace{-12pt}
    \caption{The distribution of number of CPUs requested
    in Google's recently published Borg trace \cite{tirmazi_borg}.
    Number of CPUs is normalized to the size of the smallest request observed,
    not an absolute value.
    The peak of the distribution is around 500 normalized CPUs,
    and there are sizeable tails to either side.
    }
    \vspace{-8pt}
    \label{fig:cpu_requests}
    \caption{Correlation between CPUs requested and duration
    in Google's recently published Borg trace \cite{tirmazi_borg}.
    Here we display jobs in the Google ``free'' tier,
    where there are no latency guarantees.
    Number of CPUs is normalized to the size of the smallest request observed,
    not an absolute value.
    The red line shows a best-fit power-law approximation.
    The number of servers required by a job
    and its duration are clearly correlated,
    with a Pearson correlation coefficient (log-log) of 0.45.
    }
    \label{fig:intro_correlation}
\end{figure}

Thus, to handle real-world settings,
it is vital that our multiserver-job model should both allow
jobs to queue,
as well as allow
multiple classes of jobs with different service rates.
Unfortunately, when the classes have different service rates,
prior analytical techniques become inapplicable.
In this paper,
we take the first step in solving a multiclass multiserver-job model
with both FCFS queueing and different per-class service rates.
Due to the added complexity of having different service rates,
our analysis is limited to a two-class model.

\subsection{Our multiserver-job model}
\label{sec:intro_model}

The classes in our two-class multiserver-job model 
are labeled class 1 and class 2.
Class $i$ jobs have duration (size) distributed $\Exp(\mu_i)$,
and require a fixed number of servers, $n_i$, where $n_1 < n_2$.
The total number of servers available is $n$, where $n_2 \le n$.
We make no assumptions on the relationship between $\mu_1$ and $\mu_2$.
Jobs arrive to the system according to a Poisson process with rate $\lambda$.
Arriving jobs are independently in class $i$ with probability $p_i$.
Jobs that cannot immediately receive service queue up
and are served
in first-come-first-served (FCFS) order.
\subsection{Wastage and Stability}
\label{sec:intro_wastage}
In this paper we study the stability region of the model given
in \cref{sec:intro_model}.
We derive the maximum arrival rate $\lambda^*$
such that the system is stable
(positive recurrent)
for any arrival rate $\lambda < \lambda^*$.

The key to understanding stability in the multiserver-job system
is understanding ``wasted servers'' or ``wastage,''
which we can think of as the number of servers which are idle while at least one job is in the queue.
Understanding stability and wastage in a multiclass multiserver-job model
is a difficult open problem, and is fundamental to capacity provisioning for today's
data centers.

To make the idea of wastage concrete,
first let us define
$N_\idle$ to be the number of servers which are \textit{idle} in steady state,
and define $N_\busy$ similarly.
Note that $N_\idle + N_\busy = n$.

Let us define the number of wasted servers, or the ``wastage,''
to be $E[N_\idle | \text{queue nonempty}]$.
For understanding stability,
the most important aspect of wastage is the ``limiting wastage,''
which we write as $E[N_\idle^*]$ ($E[N_\busy^*]$ is defined analogously):
\[ E[N_\idle^*]
= \lim_{\lambda \to \lambda^*} E[N_\idle]
= \lim_{\lambda \to \lambda^*} E[N_\idle | \text{queue nonempty}]. \]
The second equality holds because the probability
that the queue is nonempty goes to 1 as $\lambda \to \lambda^*$.
When it is clear from context,
we will sometimes refer to ``limiting wastage'' as simply ``wastage.''

To relate the number of wasted servers to the stability region,
let us define $S$ to be the distribution of server-seconds demanded per job, i.e.
the number of servers demanded multiplied by the time demanded.
At the border of stability, $\lambda^*$ jobs per second are arriving on average,
demanding $E[S]$ server-seconds per job; the jobs are being served by $E[N_\busy^*]$ servers.
As a result, $\lambda^* = E[N_\busy^*]/E[S]$.
In contrast, if we ignored wasted servers,
we would overestimate $\lambda^*$ as $n/E[S]$,
an estimate that we call $\lambda^\naive$.
We can thus write wastage in terms of $\lambda^*$ and $\lambda^\naive$:
\[ \text{Wastage} = E[N_\idle^*] = n - E[N_\busy^*] = (\lambda^\naive - \lambda^*) E[S].\]
As a result,
we can also think of wastage as proportional to the gap between $\lambda^*$ and $\lambda^\naive$.

Wastage can have a major impact on response time (time from arrival to departure)
in multiserver-job systems.
\begin{figure}
    \includegraphics[width=0.5\textwidth]{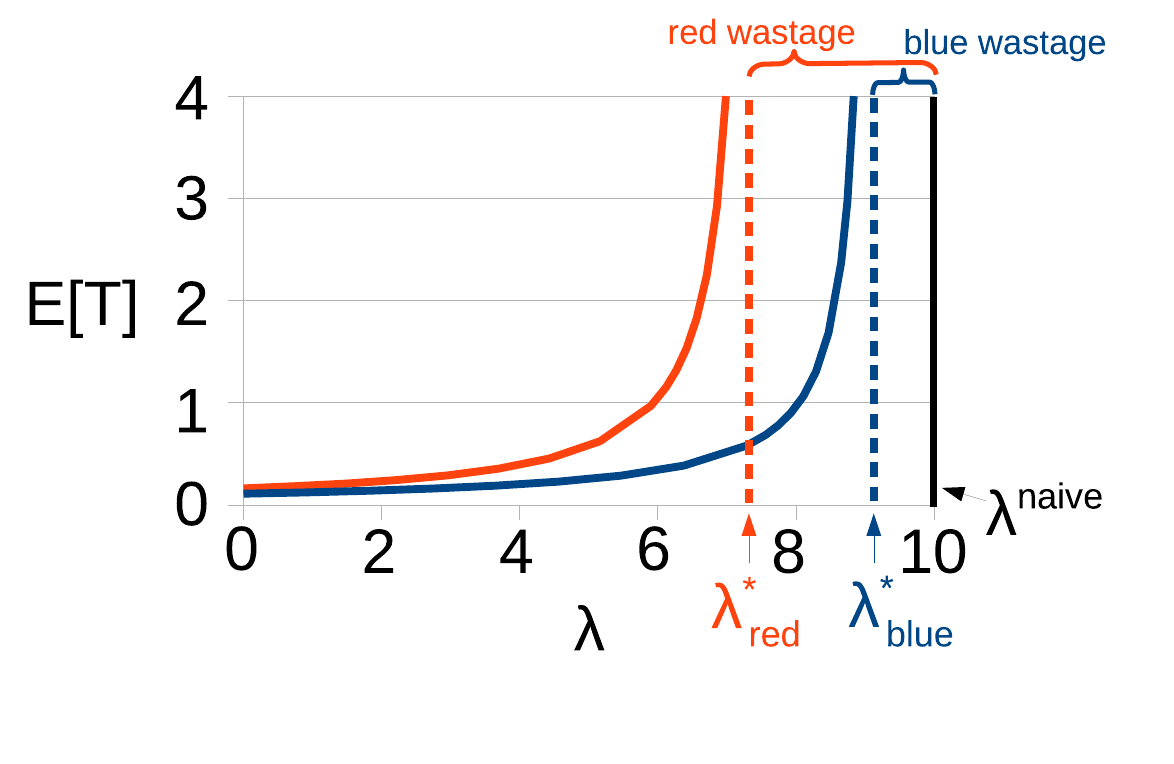}
    \vspace{-20pt}
    \caption{Mean response time $E[T]$ as a function of arrival rate $\lambda$,
    in two different systems: red and blue.
    In both systems, we set
    $n_1 = 1, n_2 = 10, \mu_1 = 2 \mu_2, n = 10,$ and $E[S] = 1$.
    In the blue system, $p_1 = 0.2, \mu_1 = 16.2, \mu_2 = 8.1$.
    In the red system, $p_1 = 0.6, \mu_1 = 8.6, \mu_2 = 4.3$.
    The black line shows $\lambda^\naive = 10$,
    where the edge of the stability region would lie in the absence of wastage.
    The dotted lines show $\lambda^*_\mathrm{blue}$ and $\lambda^*_\mathrm{red}$,
    the actual boundaries of the stability regions.
    Note that because $E[S] = 1$,
    the gap $\lambda^\naive - \lambda^*$ is equal to the wastage.}
    \label{fig:response_compare}
\end{figure}
We illustrate this impact in \cref{fig:response_compare},
where the solid red and blue curves
show the mean response time $E[T]$ as a function of
$\lambda$ in two different systems.
Both systems have the same mean server-seconds per job $E[S]=1$
and same number of servers $n=10$,
so both systems have the same naive stability region $\lambda^\naive=10$,
shown by the black line.
However, the two systems have very different amounts of wastage,
and hence very different values of $\lambda^*$,
shown by the dotted lines.
Moreover, these very different values of $\lambda^*$
shape two very different response time curves,
shown by the solid lines.

If we only had the simple estimate $\lambda^\naive$,
multiserver-job systems would be unpredictable and mysterious.
By deriving $\lambda^*$ for the two-class multiserver-job system,
we not only characterize wastage and stability,
but also take an important step towards understanding response time.

\subsection{Novel Perspective: Saturated vs. Non-saturated}
\label{sec:intro_perspective}
We solve the stability problem by shifting our focus:
Instead of directly analyzing the model described in \cref{sec:intro_model},
we start by analyzing an alternate model, the \emph{saturated system}.
In the saturated system there are always additional jobs in the queue,
so we never have to worry about
states where the queue is empty.
Instead, we can focus on only the states in which the servers are
as close to full as possible,
given the FCFS service policy.
This focus
enables us to derive a product-form steady state distribution
for the saturated system,
given in \cref{thm:embedded_saturated,thm:continuous_saturated}.

Next, we derive  \cref{thm:stability},
which characterizes
the stability region of the original model
in terms of the saturated system:
We show that $\lambda^*$,
the arrival rate which forms the upper boundary of the stability region of the original system,
is equal to the throughput of the saturated system.
Combining \cref{thm:embedded_saturated,thm:continuous_saturated,thm:stability}
allows us to characterize the stability region of our original model.

\subsection{Insights from our results}

Our analysis brings to light three important features of wastage
in multiserver-job systems, which are detailed in \cref{sec:lessons}.
\begin{enumerate}[(1)]
    \item A significant portion of the naive stability region can be lost to wastage,
        potentially 50\% or more.
        Wastage is at its worst when $n_2$ is close to or equal to $n$.
    \item Wastage is lower when jobs demanding fewer servers take less time,
        i.e. when $\mu_1 > \mu_2$.
        In practice, jobs demanding fewer servers typically do take less time,
        as seen in trace shown in \cref{fig:intro_correlation}.
        When $\mu_1$ and $\mu_2$ are roughly equal,
        or when $\mu_1 < \mu_2$, wastage is relatively higher.
    \item Wastage varies in complex and non-monotonic patterns.
        While (1) and (2) describe broad trends,
        these trends can temporarily run in reverse.
\end{enumerate}

\subsection{Contributions}

We derive the exact, closed-form stability region
for the two-class multiserver-job system.
This is the first analytical result for
any non-dropping multiserver-job system
where different classes of jobs have different service rates.
Moreover, our stability region result
has a simple sum-of-products formula,
because it is based on a product-form steady state result for the saturated system.

In addition, we use our saturated-system framework to give a dramatically
simpler proof of the stability region result of \cite{rumyantsev_stability},
in \cref{sec:simplify}.
Our proof illustrates
why the stability region of the model in \cite{rumyantsev_stability}
has a simple sum-of-products formula,
which is not addressed in \cite{rumyantsev_stability}.

Finally, in \cref{sec:lessons}, we use our solution
for the stability region to study wastage in our system.

\subsection{Outline}

In \cref{sec:prior_work}, we discuss the prior work on multiserver-job systems.
In \cref{sec:model}, we formally define our system model.
In \cref{sec:overview}, we overview our results,
deferring the proofs to
\cref{sec:embedded,sec:saturated,sec:standard}.
In \cref{sec:simplify},
we provide
a dramatically simpler analysis
of the single-service-time-distribution
model in \cite{rumyantsev_stability}.
In \cref{sec:lessons},
we discuss practical lessons.
Finally, in \cref{sec:discussion},
we discuss future directions.

\section{Prior work}
\label{sec:prior_work}
While FCFS multiserver-job models are not very common in the queueing literature,
there is some prior analytical and empirical work on related models,
which we describe in \cref{sec:dropping,sec:supercomputing,sec:vm_scheduling}.
In \cref{sec:queueing} we describe prior work in our FCFS multiserver-job model.

\subsection{Dropping multiserver-job model}
\label{sec:dropping}
Almost all existing analytical work in the multiserver-job model has focused
on a model where jobs which cannot immediately receive service
are dropped.
The \emph{dropping} multiserver-job model is well understood,
with steady state results known for very general settings.

\citet{arthurs_sizing} study a multiserver-job model with an arbitrary number of
job classes, each of which requires a fixed number of servers and
a different exponential distribution of service time.
\citet{arthurs_sizing} demonstrate that the steady state distribution follows a simple product form,
using a local balance argument.

Building upon the work of \citet{arthurs_sizing},
\citet{whitt_blocking} generalizes the dropping multiserver-job model
to allow jobs to demand two different types of server resources simultaneously.
In this generalized model,
Whitt derives another simple product-form solution
for the steady state distribution,
also making use of a local balance technique.

More recently, \citet{vandijk_blocking} generalizes the work of \citet{arthurs_sizing}
in a different direction,
allowing each job class to require a general service time distribution,
in contrast to the exponential service time distributions of prior work \citep{arthurs_sizing,whitt_blocking}.
In contrast to the prior models,
van Dijk considers a closed system,
where job completions
trigger new arrivals after a general ``think time'' distribution.
In this highly general setting,
\citet{vandijk_blocking} demonstrates
an insensitivity result,
showing that a product-form steady state distribution continues to hold.

Finally, \citet{tikhonenko_generalized}
combines the generality of \citep{whitt_blocking} and \citep{vandijk_blocking},
allowing two resources to be required as well as allowing a general service time distribution,
while returning to a Poisson arrival process.
Here the solution is far less simple,
but the author still derives a steady state solution.

All of the above works assume a dropping
multiserver-job model.

\subsection{Supercomputing}
\label{sec:supercomputing}
In supercomputing centers,
actual systems closely resemble the queueing multiserver-job model,
where jobs might demand anywhere from one core to thousands of concurrent cores 
\cite{msi_queues,vizino_batch}.
Unfortunately,
all the studies in this literature are simulation-based or empirical,
rather than analytical.
A particular focus area for these papers is studying
system utilization under a variety of scheduling policies,
such as FCFS, backfilling, and more novel policies.
Low utilization is the counterpart to a high number of wasted servers.

Many supercomputing papers have empirically observed
that utilization saturates well below 100\% efficiency
under FCFS scheduling \cite{feitelson_toward,jones_scheduling};
in some settings, FCFS utilization can be as low as 40\% \cite{jones_scheduling}.
Our findings in \cref{sec:number_wasted}
help explain this observation.

Due to the severity of wastage,
the supercomputing field is highly motivated to find ways to mitigate this behavior.
For example, reducing the maximum number of cores that any job can demand
has been observed to improve utilization \cite{jones_scheduling}.
Our findings in \cref{sec:wastage_falls}
help give an analytical explanation for this observation.

A wide variety of scheduling policies have been proposed
to alleviate the shortcomings of FCFS scheduling
\citep{tang_adaptive,kurowski_hierarchical,
feitelson_parallel,armstrong_scheduling,tang_reducing}.
These policies must juggle the tradeoff between fairness,
where FCFS excels,
and utilization, where FCFS is often lackluster.
Due to the lack of theoretical understanding of these systems,
extensive simulation is often performed to evaluate these policies
\citep{kurowski_hierarchical,tang_reducing}.

The extensive study of the multiserver-job model
in the supercomputing literature
motivates the need for theoretical studies of the model, such as ours.

\subsection{Virtual Machine scheduling}
\label{sec:vm_scheduling}
In the field of cloud computing, the Virtual Machine (VM) scheduling problem
is essentially a multi-resource generalization of the queueing multiserver-job model.
In the VM scheduling literature,
theoretical papers typically focus on finding
a throughput-optimal policy
\cite{maguluri_stochastic,psychas_nonpreemptive,guo_optimal,maguluri_scheduling,ghaderi_randomized},
and have achieved strong theoretical results in this direction.
However, their techniques are specific to throughput-optimal policies.
It is straightforward to characterize the stability region that a
throughput-optimal policy achieves,
and so these results focus on proving that a specific policy has that stability region,
which proves that the policy is throughput optimal.

In contrast, 
just characterizing the stability region under FCFS scheduling is highly nontrivial,
and the techniques developed for the throughput-optimal setting do not apply.
This is unfortunate,
because
the default scheduling policy used in the cloud-computing industry is the FCFS policy;
for example, FCFS is the default scheduling policy
in the CloudSim, iFogSim, CEPSim and GridSim cloud computing simulators \cite{madni_performance}.
However, despite the practical importance of FCFS scheduling,
comparisons with more advanced policies
have been limited to simulation \cite{cao_comparison,madni_performance}.

Our results take an initial step towards analytically characterizing
the performance of FCFS scheduling in the VM scheduling setting.
In doing so, we complement the work on throughput-optimal policies,
enabling an analytical comparison between 
these policies and FCFS.

\subsection{FCFS multiserver-job model}
\label{sec:queueing}
There are very few
analytical results for the FCFS multiserver-job model (without dropping).
All such results assume that jobs of any class
come from a \textit{single} job size distribution.

In 1979, \citet{kim_mms} studies an (FCFS) multiserver-job model
where jobs can demand any number of servers,
but all jobs require the same exponential distribution of service time.
Kim gives a matrix-geometric algorithm
to compute the steady state distribution of the number of jobs in the system.
Unfortunately, Kim's algorithm scales exponentially as the size of the system increases,
making it impractical for all but the smallest systems.

In 1984, \citet{brill_queues} focus on a two-server multiserver-job model,
with two classes that have the same exponential distribution of service time,
but require one or two servers respectively.
Brill and Green derive the steady state distribution of the system.
Unfortunately, their solution is highly complex,
involving the roots of a quartic equation.
As a result, their direct method does not easily generalize beyond the two-server system,
and does not provide much intuition.

In 2007, \citet{fillippopoulos_mm2} again study the two-server multiserver-job model,
with the same restriction that the two job classes
require the same exponential distribution of service time.
Fillippopoulos and Karatza also derive the steady state distribution of the system.
Like \citet{brill_queues}, their solution is highly complex,
requiring the roots of a similar quartic equation.
Due to the complexity of the solution,
Fillippopoulos and Karatza give simpler approximations
for mean queue length and mean response time.

Given the complexity of the steady state distribution for even a two-server model,
an attractive alternative approach is to characterize the stability region
of more general models.
In 2016, \citet{rumyantsev_stability} study a multiserver-job model
where jobs can demand any number of servers,
but again all jobs require the same exponential distribution of service time.
Rumyantsev and Morozov exactly characterize the stability region of the system
with a simple formula.
But despite the simple formula,
their proof technique provides little in the way of intuition for the solution.
In 2016, \citet{morozov_stability} generalize their model
to allow Markov Arrival Process arrivals, and show that the same stability region holds.

In 2019, \citet{afanaseva_stability} again study the stability region
of a multiserver job model.
They further generalize
the arrival model of \cite{morozov_stability}
to allow any regenerative arrival process.
They also generalize the service time distribution
to allow hypoexponential service times,
though they only derive an explicit solution for a simple two-server case.
Once again, all jobs must require the same distribution of service time.
To handle their highly general arrival process,
they introduce an ``auxiliary queueing system,''
which resembles our saturated system.

All prior analytical results in FCFS multiserver-job models
only consider systems in which all jobs
require the same distribution of service time.
This paper gives the first analytical characterization of stability
in a model without that assumption.

\section{System model}
\label{sec:model}

In this section we will define our system model and our notation.
First, we will introduce the standard, non-saturated system
in \cref{sec:non_saturated}.
This is referred to as the two-class multiserver-job model throughout this paper.
Afterwards, we will introduce the saturated system
in \cref{sec:saturated_model}.
We will first analyze the saturated system,
then use that analysis to solve the non-saturated system.

In both systems, we have two kinds of jobs: class 1 and class 2.
Class 1 jobs require $n_1$ servers and require $\Exp(\mu_1)$ time at each server;
$n_2$ and $\mu_2$ are defined similarly.
A job must be served concurrently at each of its servers
until the job is finished.
The system has $n$ servers in total.
We assume that $n_1 < n_2 \le n$,
so class 1 jobs require fewer servers.
We assume that an arriving job (or a job in the saturated queue)
is class 1 with i.i.d. probability $p_1$,
and is class 2 with probability $p_2 = 1-p_1$.
Finally, we assume that jobs are served in strict FCFS order.

\subsection{Non-saturated system}
\label{sec:non_saturated}
In the non-saturated system
jobs arrive over time, either entering service immediately or being queued.
We assume that jobs arrive according to a Poisson process with rate $\lambda$.
Our goal is to characterize the values of $\lambda$ for which the system is stable.

\begin{figure}
    \centering
    \includegraphics[width=0.5\textwidth]{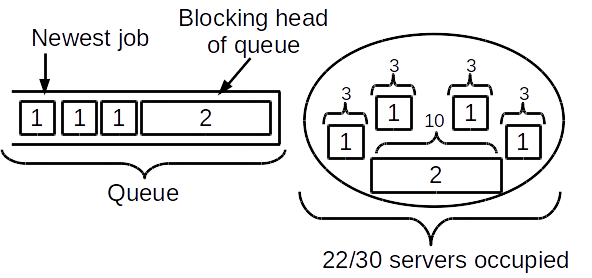}
    \caption{A possible state of the 3-10-30 system
    ($n_1 = 3$, $n_2 = 10$, and $n = 30$).
    Four class 1 jobs and one class 2 job
    are in service, a class 2 job is blocking the head of the queue,
    and three more class 1 jobs are in the queue.}
    \label{fig:queue_example}
\end{figure}
A crucial aspect of our analysis is choosing the right state descriptor for the system.
For example, consider a system with $n_1 = 3$, $n_2 = 10$, and $n = 30$.
We will call this the ``3-10-30 system.''
Suppose the 3-10-30 system is in the state shown in \cref{fig:queue_example}.
One straightforward way to describe the state
would be to 
list the classes of all jobs in the queue, in arrival order,
and count the number of jobs in each class at the server.
In \cref{fig:queue_example},
that descriptor would be $[[1, 1, 1, 2]; 4; 1]$, 
or in general
\begin{center}
[[ Queue ]; \# class 1 in service; \# class 2 in service ]
\end{center} 

However, this state descriptor contains some unnecessary information.
At this moment in time,
we do not need to know the classes of the three jobs at the back of the queue.
To simplify the state description,
we can delay sampling the classes of those jobs
until their classes become relevant.

The classes of the other jobs in the system
are all relevant in this example.
The classes of the jobs in service always matter,
because they determine the rate of job completions.
Here,
the class of the job at the front of the queue is also important:
Because the front job in the queue is a class 2 job,
demanding 10 servers, it cannot receive service,
while a class 1 job, demanding 3 servers,
would fit.
In situations
where a class 2 job does not fit but a class 1 job would fit,
we describe the class 2 job
as ``blocking the head of the queue,'' or ``blocking'' for short.

In contrast, if there were two more class 1 jobs in service,
there would only be 2 remaining servers available.
In that case, neither class of job would be able to fit.
In cases where neither class of job fits,
we say that all jobs in the queue are non-blocking.

A simpler state description
gives only the necessary information:
the number of non-blocking jobs in the queue
(of unspecified class),
the number (0 or 1) of blocking jobs,
and the number of jobs of each class in service.
For the example in \cref{fig:queue_example}, that descriptor is $[3, 1, 4, 1]$,
or in general
\begin{center}
    [\# (non-blocking) jobs in queue, \# blocking jobs, \# class 1 in service, \# class 2 in service]
\end{center}
We will use this state descriptor in this paper.
Note that by definition, a blocking job must be a class 2 job,
and there can be at most one such job in the system.

\subsection{Saturated System}
\label{sec:saturated_model}

To describe a state in the saturated system,
we simply omit the number of non-blocking jobs in the queue,
and otherwise use the same descriptor as the non-saturated system.
For instance,
the saturated system state corresponding to
\cref{fig:queue_example}
would be $[1, 4, 1]$.

Notice that in the saturated system,
there are only a finite number of possible states.
For instance, in the 3-10-30 system from \cref{fig:queue_example},
the possible states are:
\begin{align*}
    [0, 0, 3], [1, 1, 2], [1, 2, 2], [0, 3, 2],
    [1, 4, 1], [1, 5, 1],
    [0, 6, 1], [1, 7, 0],
    [1, 8, 0], [1, 9, 0], [0, 10, 0].
\end{align*}

Note that there is \emph{exactly one state for each possible number of class 1 jobs in service}.
To see why, imagine starting with a given number of class 1 jobs in service,
then adding class 2 jobs until either
the servers are ``full'' (there are not enough servers for an additional job of either class)
or the queue is blocked.
This will always result in a unique state of the saturated system.
We will refer to the unique state with exactly $a$ class 1 jobs in the server
as $s_1(a)$. This state can be blocking or non-blocking.
In the 3-10-30 system, $s_1(4)$ is the state $[1, 4, 1]$,
while $s_1(6)$ is the state $[0, 6, 1]$.

There can be many states with a given number of class 2 jobs in service,
but there is always \emph{a unique state of the system
with a given number of class 2 jobs in service and no job blocking the queue}.
To see why, imagine starting with a given number of class 2 jobs in service,
then adding class 1 jobs until
no more class 1 jobs will fit into service.
This will always result in a unique non-blocked state of the saturated system.
We will refer to the unique non-blocked state with exactly $b$ class 2 jobs in the server
as $s_2(b)$.
For instance, in the 3-10-30 system, $s_2(2)$ is the state $[0, 3, 2]$.

We will analyze two different Markov chains based on the saturated system.
First, there is the standard continuous-time Markov chain (CTMC).
Second, we will analyze the embedded DTMC
where we only record the state just after each job completion.
The CTMC is characterized by the rate at which transitions occur from each state,
as well as the probability of transitioning to each other state.
In the embedded DTMC,
we only need to think about the transition probabilities.

Every transition consists of a job completion,
followed by zero, one or many jobs moving from the infinite queue
into service,
and possibly a class 2 job blocking the head of the queue.

The rate of transitions out of each state
must equal the rate of completions in that state.
In state $[\cdot, a, b]$, this rate is $a \mu_1 + b \mu_2$.
For instance, in the 3-10-30 system, in state $[1, 4, 1]$,
completions occur at a rate of $4\mu_1 + \mu_2$.
Note that we consider every completion
to be a transition,
even if a job enters from the queue which matches the completed job,
resulting in the state not changing.
For instance, if we are in state $[0, 10, 0]$
in the 3-10-30 system,
and a class 1 job completes and another class 1 job enters from the queue,
we consider this event to be a transition.

Every transition can be specified by the class of job
which completes,
and the classes of jobs which enter from the queue.
For a given state $s$, let $f_1(s)$ denote the probability that the
next completion is a class 1 job,
and let $f_2(s) = 1-f_1(s)$ denote the probability the next completion
is a class 2 job.
For a given state $[\cdot, a, b]$, we have
\begin{align}
    \label{eq:def_f}
    f_1([\cdot, a, b]) = \frac{a \mu_1}{a \mu_1 + b \mu_2},
    \qquad f_2([\cdot, a, b]) = \frac{b \mu_2}{a \mu_1 + b \mu_2}.
\end{align}

To calculate the probability of a specific transition occurring,
there are four factors to consider:
\begin{itemize}
    \item The initial state $s$.
    \item The class of the completing job: $i$.
    \item The number of class 1 jobs which must enter from the queue: $j$.
    \item The number of class 2 jobs which must enter from the queue: $k$.
\end{itemize}
Observe that $k \le 1$.
More specifically, observe that if a class 2 job enters from the queue,
as opposed to from a blocking position,
that job must be the last job to enter from the queue in a given transition.
As a result, for a given transition there is only one possible order
in which jobs can enter from the queue.
Thus, every transition has a probability of occurring of the form
$f_i(s) p_1^j p_2^k.$

Generically, we refer to the probability of transitioning
from state $s$ to state $s'$ as $P(s, s')$.
For example, in the 3-10-30 system, from \cref{fig:queue_example},
\[P([1, 4, 1], [0, 6, 1]) = f_2([1, 4, 1]) p_1^2\]
since we can only transition from $[1, 4, 1]$ to $[0, 6, 1]$
if a class 2 job completes,
the blocking job moves into service,
and the next two jobs in the queue are class 1 jobs.

\section{Overview of results}
\label{sec:overview}
Our goal is to derive the stability region
of the (non-saturated) multiserver-job system.
In order to do so, we start by deriving the throughput of the saturated system.

We will first study the embedded DTMC of the saturated system,
where we only look at the state after each departure (\cref{thm:embedded_saturated}).
This result will then enable us to derive the throughput of the CTMC of the saturated system
(\cref{thm:continuous_saturated}).
We will then use the throughput of the saturated system
to find the stability region of the original non-saturated system
(\cref{thm:stability}).

Our first theorem gives the steady state
of the embedded chain of the saturated system:
\begin{theorem}
    \label{thm:embedded_saturated}
    The steady state distribution of the embedded DTMC of the saturated system is:
    \begin{align}
        \nonumber
        \pi_{[0, a, b]} &= C p_1^a p_2^b
            \prod_{i=1}^a \frac{1}{f_1(s_1(i))}
            \prod_{j=1}^b \frac{1}{f_2(s_2(j))}\\
            \label{eq:steady_state_guess}
        \pi_{[1, a, b]} &= C p_1^a p_2^{b+1}
            \prod_{i=1}^a \frac{1}{f_1(s_1(i))}
            \prod_{j=1}^b \frac{1}{f_2(s_2(j))}
    \end{align}
    where $C$ is a normalizing constant.
\end{theorem}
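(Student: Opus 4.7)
The plan is to verify directly that the proposed distribution $\pi$ satisfies the stationary equations $\pi = \pi P$ for the finite, irreducible embedded DTMC. The form of the product distribution is suggestive: each factor $p_i / f_i(\cdot)$ is exactly the ratio between the probability that an incoming queue job is of class $i$ and the probability that the next completion is of class $i$, which hints that balance will come from a telescoping cancellation between the $f_i$ factors in $P$ and the $1/f_i$ factors in $\pi$.

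First, I would record the simple identity $\pi_{[1,a,b]} = p_2 \cdot \pi_{[0,a,b]}$ whenever both states exist. This reduces the verification for blocking states to the non-blocking case: every incoming transition to $[1,a,b]$ is obtained from an incoming transition to $[0,a,b]$ by appending one additional class 2 arrival from the queue (the one that ends up blocking the head), contributing exactly the extra factor of $p_2$ that the formula prescribes.

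Second, I would enumerate the predecessors of an arbitrary non-blocking state $s = [0, a, b]$. Every incoming transition is uniquely specified by (i) the class $i$ of the job that completed in the predecessor $s'$, (ii) the number $j \ge 0$ of class 1 jobs that then entered from the queue, and (iii) the number $k \in \{0, 1\}$ of class 2 jobs that entered, with transition probability $f_i(s') \cdot p_1^j p_2^k$. Substituting the proposed formula for $\pi_{s'}$ into $\sum_{s'} \pi_{s'} P(s',s)$, the factor $1/f_i(s_i(\cdot))$ sitting in $\pi_{s'}$ is designed to cancel the leading $f_i(s')$ of $P(s',s)$, leaving a sum of pure $p_1^{\,\cdot} p_2^{\,\cdot}$ weights that should collapse to the claimed $\pi_s$.

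The main obstacle will be the combinatorial bookkeeping for predecessors at the boundary of the state space, where the number of class 1 jobs that can have entered the queue is constrained, and where the distinction between reaching $s$ via $s_1(a)$ versus via $s_2(b)$ matters. A cleaner route may be to prove a \emph{partial balance} identity, decoupling the class 1 and class 2 contributions: that the incoming flux into $s$ from class 1 completions equals the outgoing flux from class 1 completions, and likewise for class 2. If partial balance holds, the two products in the formula verify independently, each via a one-dimensional telescoping that directly reflects the defining identities for $f_1(s_1(i))$ and $f_2(s_2(j))$, thereby avoiding the joint enumeration entirely.
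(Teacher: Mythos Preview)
Your overall plan---verify $\pi = \pi P$ by splitting the stationary equation according to the class of the completing job, and then exploit telescoping in the products---is exactly the route the paper takes. But the specific ``partial balance'' you propose is not the one that holds. You claim that the incoming flux into $s$ from class~$i$ completions equals the outgoing flux from class~$i$ completions, i.e.\ $\sum_{s'} \pi_{s'} P_i(s',s) = \pi_s f_i(s)$. This is false in general: for a non-blocking state $s=[0,a,b]$, the paper's calculation shows the left side equals $p_i \pi_s$, and $p_i \neq f_i(s)$ unless $\tfrac{a\mu_1}{a\mu_1+b\mu_2}$ happens to equal $p_1$. The correct decomposition is instead
\[
    p_1 \pi_s = \sum_{s'} \pi_{s'} P_1(s',s),
    \qquad
    p_2 \pi_s = \sum_{s'} \pi_{s'} P_2(s',s),
\]
which sums to the full balance equation because $p_1+p_2=1$. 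The asymmetry here---$p_i$ on the left, not $f_i(s)$---is precisely what the product $\prod_i p_1/f_1(s_1(i)) \cdot \prod_j p_2/f_2(s_2(j))$ is engineered to produce, and it is what makes the class~2 case require a genuine telescoping lemma rather than a one-line cancellation.

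Your first reduction also rests on a misreading of the state space. You write ``$\pi_{[1,a,b]} = p_2\,\pi_{[0,a,b]}$ whenever both states exist,'' but in the saturated system there is exactly one state for each value of $a$, so $[0,a,b]$ and $[1,a,b]$ never coexist for the same $(a,b)$. The algebraic identity between the two \emph{formulas} is correct and is used repeatedly in the paper, but you cannot reduce the balance equation for $[1,a,b]$ to that of a nonexistent state $[0,a,b]$; the blocking and non-blocking cases must be handled separately (the predecessor sets are genuinely different, cf.\ transitions of type $v$ versus $viii$).
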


From \cref{thm:embedded_saturated},
we derive the steady state and throughput of the
CTMC of the saturated system:
\begin{theorem}
    \label{thm:continuous_saturated}
    The steady state distribution of the saturated system is:
    \[ \p_{[h, a, b]} = X \frac{\pi_{[h, a, b]}}{a \mu_1 + b \mu_2}\]
    where $X$, the normalizing constant,
    is the throughput of the saturated system:
    \[ X = \left( \sum_{[h, a, b]} \frac{\pi_{[h, a, b]}}{a \mu_1 + b \mu_2} \right)^{-1}.\]
\end{theorem}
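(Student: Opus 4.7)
The plan is to obtain \cref{thm:continuous_saturated} from \cref{thm:embedded_saturated} by the standard semi-Markov process construction: the CTMC stationary probability of a state is proportional to the embedded-at-completions probability times the mean sojourn in that state. In the saturated system, the only events are job completions, so every epoch of the embedded DTMC is a completion epoch and the time between consecutive epochs spent in state $[h,a,b]$ is exponential with rate $a\mu_1 + b\mu_2$ (the total completion rate). Consequently the mean sojourn in $[h,a,b]$ is $1/(a\mu_1 + b\mu_2)$.

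First I would invoke the semi-Markov identity relating the embedded chain $\pi$ to the continuous-time stationary distribution $\p$:
\[ \p_{[h,a,b]} \;=\; \frac{\pi_{[h,a,b]}/(a\mu_1 + b\mu_2)}{\sum_{[h',a',b']} \pi_{[h',a',b']}/(a'\mu_1 + b'\mu_2)}. \]
This is precisely the claimed formula, with the denominator acting as the normalizing constant and therefore identifiable with $X$. A subtle but essential bookkeeping point to highlight is that the DTMC of \cref{thm:embedded_saturated} treats every completion as a transition, including the ``self-loops'' in which the completing job is replaced by a queued job of the same class so that the state is unchanged. Without that convention the exponential-sojourn rate would be thinned and the formula would not match; with it, the semi-Markov identity applies verbatim.

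Second, I would compute the throughput of the saturated system directly from $\p$. By definition the throughput is the long-run rate of completions, and in a CTMC this equals the stationary probability of each state weighted by its total completion (exit) rate:
\[ \text{Throughput} \;=\; \sum_{[h,a,b]} \p_{[h,a,b]}\,(a\mu_1 + b\mu_2) \;=\; X \sum_{[h,a,b]} \pi_{[h,a,b]} \;=\; X, \]
since $\pi$ is already a probability distribution on the (finite) saturated state space by \cref{thm:embedded_saturated}. Thus the same constant $X$ that normalizes $\p$ is the throughput, as claimed.

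The main obstacle is not the algebra but the cleanliness of the semi-Markov argument: one must be explicit that the embedded chain in \cref{thm:embedded_saturated} records \emph{every} completion (including replacement self-loops) and that this is the correct sampling convention for which the ``$\pi_s / q_s$'' formula holds. Once this is established, combining it with the fact that $\pi$ sums to one immediately yields both the distributional formula and the identification of the normalizer with the throughput.
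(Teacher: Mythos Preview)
Your proposal is correct and follows essentially the same route as the paper: both obtain $\p_s \propto \pi_s/\nu_s$ with $\nu_{[h,a,b]}=a\mu_1+b\mu_2$ and identify the normalizer with the throughput. The only cosmetic differences are that the paper verifies the CTMC balance equations $\p_s\nu_s=\sum_{s'}\p_{s'}\nu_{s'}P(s',s)$ explicitly (rather than citing the semi-Markov identity) and identifies $X$ via the mean interdeparture time $\sum_s \pi_s/\nu_s$ rather than via $\sum_s \p_s\nu_s$; your careful remark about counting self-loop completions as transitions is exactly the point that makes $\nu_s$ the full completion rate, and the paper relies on this convention as well.
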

Finally, we use \cref{thm:continuous_saturated} to derive the stability region
of the original non-saturated system.
\begin{theorem}
    \label{thm:stability}
    The original non-saturated system is stable with arrival rate $\lambda$ if
    \[ \lambda < \lambda^* \]
    where $\lambda^* = X$ is the throughput of the saturated system,
    and unstable if $\lambda > \lambda^*$.
\end{theorem}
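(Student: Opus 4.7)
The plan is to prove both directions of the stability boundary by a drift argument that uses the saturated chain as a reference. The key structural observation is that, whenever the non-saturated system has at least one job in its (non-blocking) queue, each successive transition of its in-service configuration $[h, a, b]$ is governed by exactly the transition probabilities of the saturated chain from \cref{thm:embedded_saturated}. Hence, as long as the queue stays populated over a given time window, the joint law of the in-service process is identical to the corresponding segment of the saturated CTMC, which by \cref{thm:continuous_saturated} has throughput $X$.

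For the stability direction ($\lambda < \lambda^*$), I would apply the multi-step Foster--Lyapunov criterion to the CTMC on the full state $[q, h, a, b]$, with Lyapunov function $V = q$. Fix $\epsilon > 0$ with $\lambda + 2\epsilon < X$. Because the saturated chain lives on a finite, irreducible state space, it is uniformly ergodic: there exists a window length $\Delta > 0$ such that the expected number of completions in the saturated system on $[0, \Delta]$ is at least $(X - \epsilon)\Delta$ for \emph{every} starting in-service configuration. For any non-saturated state with $q$ above a threshold like $\Delta(\mu_1 + \mu_2)n + n$, the queue cannot be emptied within $\Delta$, so the in-service process on $[0, \Delta]$ has the same law as that of the saturated chain. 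Combined with $\lambda\Delta$ expected arrivals, this yields expected drift at most $-\epsilon\Delta$ in $V$ uniformly on the tail. Since drift is uniformly bounded on the finite complement, Foster's criterion gives positive recurrence.

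For the instability direction ($\lambda > \lambda^*$), I would combine a symmetric drift computation with a non-positive-recurrence criterion. With $\lambda - X > 2\epsilon$, the same window argument on the tail $\{q \text{ large}\}$ yields expected drift at least $+\epsilon\Delta$ in $V$. A standard criterion (uniformly positive drift plus uniformly bounded per-window increments on the tail, combined with irreducibility) then rules out positive recurrence. Equivalently, one can argue directly via the strong law: arrivals accumulate at rate $\lambda$, while ergodicity of the finite saturated chain makes cumulative completions in the non-saturated system at most $Xt + o(t)$ on any sample path whose queue stays populated, so $q(t) \to \infty$ almost surely, ruling out a stationary distribution.

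The main obstacle is formalizing the identity-of-laws step common to both directions: one must verify that, conditional on the queue never being depleted in $[0, \Delta]$, the joint evolution of $[h, a, b]$ in the non-saturated system is distributionally indistinguishable from that of the saturated chain, including the FCFS blocking logic (a class 2 job at the head of the queue can stall service while class 1 arrivals pile up behind it). This is intuitively clear but slightly delicate because the blocking indicator $h$ interacts with the classes of the unsampled jobs in the queue. Once this pathwise identity is in place, the quantitative bound on expected completions follows from uniform exponential mixing of the finite saturated chain, and the drift computations close immediately.
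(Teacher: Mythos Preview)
Your approach is essentially the paper's: both arguments exploit uniform ergodicity of the finite saturated chain to find a time window $\Delta$ over which expected completions exceed $\lambda\Delta$ from every initial in-service configuration, then apply a Foster-type drift criterion to a chain sampled every $\Delta$ time units. The paper packages the coupling explicitly as an ``Augmented Saturated System'' (the saturated chain running unconditionally, paired with a jobs counter that increments on arrivals and decrements on completions, floored at zero), which sidesteps the identity-of-laws delicacy you flag: because the saturated component of the AugSS never pauses, one never has to condition on the non-saturated queue staying populated.

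One technical imprecision to fix: your claim that for $q$ above a threshold like $\Delta(\mu_1+\mu_2)n + n$ ``the queue cannot be emptied within $\Delta$'' is false as stated, since the number of completions over $[0,\Delta]$ is Poisson-dominated and almost-surely unbounded. What you actually need is that the probability of depletion, and its contribution to the expected drift, vanish as $q\to\infty$; the paper handles exactly this via a term $E[C_0(t_1,i)]$ (completions occurring while the counter is at zero) and bounds it by $0.1$ for all sufficiently large counter values. Once you replace the deterministic threshold with this probabilistic tail bound, your drift computation closes and the two proofs are effectively identical.
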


We will prove \cref{thm:embedded_saturated} in \cref{sec:embedded},
\cref{thm:continuous_saturated} in \cref{sec:saturated},
and \cref{thm:stability} in \cref{sec:standard}.

\section{Embedded chain of saturated system: Proof of Theorem~\ref{thm:embedded_saturated}}
\label{sec:embedded}
\begin{proof}[Proof of \cref{thm:embedded_saturated}]

    In order to verify the guess in \cref{eq:steady_state_guess},
    we must show that the stationary equations hold for each state:
    \begin{align}
        \label{eq:balance}
        \pi_s = \sum_{s'} \pi_{s'} P(s', s)
    \end{align}
    For instance, take the 3-10-30 system,
    with $n_1 = 3$, $n_2 = 10$ and $n = 30$.
    Let's write down the stationary equation for state $s = [1, 5, 1]$.
    First, the possible states that can transition to state $s$ are
    \[ [0, 6, 1], [0, 3, 2], [1, 4, 1], [1, 5, 1] \]
    Therefore, \cref{eq:balance} states that
    we must show that under the steady state guess,
    \begin{align*} \pi_{[1, 5, 1]} &= \pi_{[0, 6, 1]} P([0, 6, 1], [1, 5, 1])
        + \pi_{[0, 3, 2]} P([0, 3, 2], [1, 5, 1])\\
        &+ \pi_{[1, 4, 1]} P([1, 4, 1], [1, 5, 1])
        + \pi_{[1, 5, 1]} P([1, 5, 1], [1, 5, 1])\\
        &= \pi_{[0, 6, 1]} f_1([0, 6, 1]) p_2
        + \pi_{[0, 3, 2]} f_2([0, 3, 2]) p_1^2 p_2\\
        &+ \pi_{[1, 4, 1]} f_2([1, 4, 1]) p_1 p_2
        + \pi_{[1, 5, 1]} f_2([1, 5, 1]) p_2
    \end{align*}

    In order to simplify the process of proving that the stationary equations hold,
    we will decompose each stationary equation
    into two simpler equations,
    corresponding to the class of the job completed in each transition.
    By proving that each decomposed equation holds,
    we will prove that the their sum, the stationary equation, also holds.
    
    Let $P_1(s', s)$ be the probability
    a transition from $s'$ to $s$
    due to a completion of a class 1 job,
    and define $P_2(s', s)$ similarly.
    Note that $P_1(s', s) + P_2(s', s) = P(s', s)$.

    Let us define a pair of decomposed stationary equations for any state $s$:
    \begin{align}
        \label{eq:decompose_1}
        p_1 \pi_s = \sum_{s'} \pi_{s'} P_1(s', s)
        \qquad
        \qquad
        p_2 \pi_s = \sum_{s'} \pi_{s'} P_2(s', s)
    \end{align}
    The decomposed stationary equations in \cref{eq:decompose_1}
    are sufficient, though not necessary,
    for the stationary equation \cref{eq:balance} to hold.
    For a specific example, in the state $s = [1, 5, 1]$ in the 3-10-30 system,
    the decomposed stationary equations require that
    \begin{align*}
        p_1 \pi_{[1, 5, 1]} &= \pi_{[0, 6, 1]} f_1([0, 6, 1]) p_2\\
        p_2 \pi_{[1, 5, 1]} &= \pi_{[0, 3, 2]} f_2([0, 3, 2]) p_1^2 p_2
        + \pi_{[1, 4, 1]} f_2([1, 4, 1]) p_1 p_2
        + \pi_{[1, 5, 1]} f_2([1, 5, 1]) p_2
    \end{align*}

    Towards proving that the decomposed stationary equations hold,
    we will first enumerate the transitions which are possible from each state,
    in \cref{sec:possible_transitions}.
    Then, we will use that enumeration to verify the decomposed stationary equations
    in the rest of \cref{sec:embedded}.

    Specifically, we will prove that the decomposed stationary equations hold
    in five cases, collectively covering all states and all decomposed stationary equations:
    \begin{itemize}
        \item The class 1 decomposed equation for states without a blocking job,
            i.e. $[0, a, b]$: see \cref{sec:class_1_no_blocking}.
        \item The class 1 decomposed equation for states with a blocking job,
            i.e. $[1, a, b]$: see \cref{sec:short_class_1_blocking}.
        \item The class 2 decomposed equation for $[0, a, b]$ states:
            see \cref{sec:class_2_no_blocking}.
        \item The class 2 decomposed equation for $[1, a, b]$ states:
            see \cref{sec:short_class_2_blocking}.
        \item Edge cases where $a$ or $b$ is 0: see \cref{sec:short_edge_cases}.
    \end{itemize}
    \subsection{Possible transitions}
    \label{sec:possible_transitions}
    \begin{figure}
        \centering
        \includegraphics[width=\textwidth]{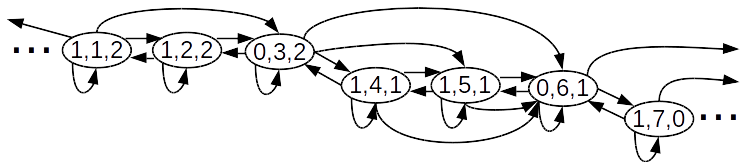}
        \caption{All possible transitions in the 3-10-30 system,
        when starting from states with 1 to 7 class 1 jobs.}
        \label{fig:possible_transitions}
    \end{figure}
    In \cref{lem:possible_transitions},
    we enumerate all transitions which are possible from each state in a generic system.
    To help visualize this,
    \cref{fig:possible_transitions}
    shows all possible transitions for part of the 3-10-30 system.

    \begin{lemma}
        \label{lem:possible_transitions}
        The possible transitions in the saturated system are exactly:
        \begin{enumerate}[i.]
            \item From $[0, a, b]$ to $[0, a, b]$, via a class 1 completion, whenever $a > 0$.
            \item From $[0, a, b]$ to $s_1(a-1)$, via a class 1 completion, whenever $a > 0$.
            \item From $[0, a, b]$ to $[0, a, b]$, via a class 2 completion, whenever $b > 0$.
            \item From $[0, a, b]$ to $s_2(b-1)$, via a class 2 completion, whenever $b > 0$.
            \item From $[0, a, b]$ to $[1, a', b-1]$,
                where $a'$ is any number greater than $a$
                and less than the number of class 1 jobs in state $s_2(b-1)$.
                Occurs via a class 2 completion, can occur whenever $b>0$.
            \item From $[1, a, b]$ to $s_1(a-1)$, via a class 1 completion,
                whenever $a > 0$.
            \item From $[1, a, b]$ to $s_2(b)$, via a class 2 completion,
                whenever $b > 0$.
            \item From $[1, a, b]$ to $[1, a', b]$,
                where $a'$ is any number greater than or equal to $a$
                and less than the number of class 1 jobs in state $s_2(b)$.
                Occurs via a class 2 completion, can occur whenever $b > 0$.
        \end{enumerate}
    \end{lemma}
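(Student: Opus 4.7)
The plan is a case analysis: four main cases split by starting state type ($[0, a, b]$ versus $[1, a, b]$) and the class of the completing job. For each case I would trace through the deterministic procedure of a completion, possibly the pre-existing blocking class 2 entering service, and then FCFS insertions from the infinite queue until either saturation or a new blocking class 2 appears. Before the case analysis, I would establish two structural invariants that keep the bookkeeping manageable. First, because $n_1 < n_2$, any configuration with room for a class 2 job also has room for a class 1 job; hence any non-blocking state $[0, a, b]$ reached after a transition must have idle capacity $i := n - a n_1 - b n_2 < n_1$, since otherwise the next queued job would have been inserted regardless of its class. Second, a blocking state $[1, a, b]$ must satisfy $n_1 \le i < n_2$ directly from the definition of blocking. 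Combined with the uniqueness of $s_1(a)$ and $s_2(b)$ already established in the paper, these invariants let me identify resulting states immediately from their class 1 or class 2 counts.

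With the invariants in place I would process the four cases. For $[0, a, b]$ with a class 1 completion, the freed $n_1$ servers push idle capacity into $[n_1, 2 n_1)$, so exactly one class 1 fits; queue head class 1 gives case (i), and queue head class 2 (whether it fits and enters, or does not fit and blocks) gives a state with $a-1$ class 1 jobs, identified as $s_1(a-1)$ by uniqueness (case ii). For $[0, a, b]$ with a class 2 completion, the bound $i < n_1$ implies a class 2 cannot fit after any class 1 insertion; hence the only possibilities are an immediate class 2 insertion returning to $[0, a, b]$ (case iii), or a run of class 1 insertions ending either in saturation at $s_2(b-1)$ (case iv) or in a blocking class 2 at some intermediate $[1, a', b-1]$ (case v). For $[1, a, b]$ with a class 1 completion, the result is $s_1(a-1)$ irrespective of whether the pre-existing blocking class 2 now fits and enters service, since either sub-case leaves exactly $a-1$ class 1 jobs followed by a saturated configuration (case vi). For $[1, a, b]$ with a class 2 completion, the blocking class 2 always fits after the freed $n_2$ and enters; from the resulting configuration only class 1 jobs can fit (since $i < n_2$), yielding either $s_2(b)$ when the run completes (case vii) or a new blocking $[1, a', b]$ (case viii).

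The main technical obstacle is the analysis of cases (v) and (viii), where a variable number $a' - a$ of class 1 jobs may enter before a blocking class 2 terminates the run. For these I need to pin down exactly which $a'$ are reachable and to rule out any other possibilities. The key computation is to track the idle capacity after $k$ class 1 insertions, showing first that a class 2 never fits at any intermediate state (so the only terminations are a class 1 that does not fit, producing saturation, or a class 2 that does not fit while class 1 would, producing blocking) and second that class 1 continues to fit for precisely the values of $k$ allowed by the construction of $s_2(b-1)$ in case (v) or $s_2(b)$ in case (viii). The stated bounds on $a'$ in (v) and (viii) then fall out directly.
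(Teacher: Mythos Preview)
Your proposal is correct and follows essentially the same approach as the paper: a case split by starting state type (non-blocking versus blocking) and class of completing job, with the idle-capacity bounds $i<n_1$ for $[0,a,b]$ and $n_1\le i<n_2$ for $[1,a,b]$ driving the analysis. The paper states these same bounds and traces through the same four scenarios; your version is slightly more explicit in isolating the invariants up front and in pinning down the admissible range of $a'$ in cases (v) and (viii), whereas the paper simply asserts those ranges, but the underlying argument is identical.
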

    \begin{proof}
    We will first handle the case of starting from non-blocking states,
    then the case of starting from blocking states.
    \subsubsection{Non-blocking states}
    First, consider starting from the state $[0, a, b]$.
    Out of the $n$ servers, at most $n_1 - 1$ can be unoccupied.
    Otherwise,
    a job would either enter service or block the queue.

    Every transition begins with a job completion.
    The job which completes can be either a class 1 job or a class 2 job
    (unless $a$ or $b$ is 0).
    If a class 1 job completes, then between $n_1$ and $2n_1 - 1$ servers are unoccupied.
    At this point, another job must enter from the queue.
    If that job is a class 1 job, then we are back in state $[0, a, b]$,
    and our transition is complete.

    If that job is a class 2 job, then either the job enters service,
    leaving fewer than $n_1 -1$ servers unoccupied,
    or the job blocks the head of the queue.
    In either case, the transition is complete.
    The state transitioned to is $s_1(a-1)$,
    which may be either $[0, a-1, b+1]$ or $[1, a-1, b]$.
    In the 3-10-30 system, the transition from $[0, 3, 2]$ to $[1, 2, 2]$
    would fall into this category.

    On the other hand, if we are in state $[0, a, b]$ and a class 2 job completes,
    between $n_2$ and $n_2 + n_1 - 1$ servers are unoccupied.
    At this point, a job must enter from the queue.
    If that job is a class 2 job, then we are back in state $[0, a, b]$,
    and our transition is complete.

    If that job is a class 1 job, then at most $n_2 - 1$ servers are unoccupied.
    At this point, more jobs may enter from the queue.
    In the case where only class 1 jobs
    arrive from the queue,
    filling the servers,
    the system transitions to state $s_2(b-1)$,
    since no job ends up blocking the head of the queue.
    In the 3-10-30 system, the transition from $[0, 3, 2]$ to $[0, 6, 1]$
    falls into this category.

    On the other hand, a class 2 job may arrive after
    at least one class 1 job has arrived but
    before the servers are full.
    The class 2 job cannot fit into service, since there were previously at most $n_2 - 1$
    unoccupied servers.
    Instead, the class 2 job will block the head of the queue.
    Transitions in this category can take us to any state of the form $[1, a', b-1]$,
    where $a'$ is greater than $a$ but less than the number of
    class 1 jobs in the state $s_2(b-1)$.
    In the 3-10-30 system, the transitions from $[0, 3, 2]$ to $[1, 4, 1]$ and $[1, 5, 1]$
    fall into this category.

    That completes the proof for states of the form $[0, a, b]$.
    \subsubsection{Blocking states}

    Next, consider starting from the state $[1, a, b]$.
    Out of the $n$ servers available, between $n_1$ and $n_2 - 1$ servers are
    unoccupied.
    Any more, and the blocking job would enter service.
    Any fewer, and the job at the head of the queue would not be a blocking job.

    If a class 1 job completes,
    then between $2n_1$ and $n_1 + n_2 - 1$ servers are left unoccupied.
    If fewer than $n_2$ servers are left unoccupied,
    then there is not enough room for the job blocking the head of the queue
    to enter service, and the transition is complete.
    The state transitioned to is $[1, a-1, b]$.
    In the 3-10-30 system, the transition from $[1, 5, 1]$ to $[1, 4, 1]$
    falls into this category.

    Otherwise, the class 2 job enters service,
    leaving at most $n_1 - 1$ servers unoccupied.
    This is not enough room for another job,
    so the transition is complete.
    The state transitioned to is $[0, a-1, b+1]$.
    In the 3-10-30 system, the transition from $[1, 4, 1]$ to $[0, 3, 2]$
    falls into this category.

    These two cases,
    transitioning from $[1, a, b]$ to either $[1, a-1, b]$ or $[0, a-1, b+1]$,
    can be summarized as a transition to state $s_1(a-1)$.
    
    On the other hand, if we are in state $[1, a, b]$ and a class 2 job completes, 
    the job blocking the head of the queue will enter service.
    At this point, between $n_1$ and $n_2 - 1$ servers are left unoccupied,
    so more jobs must enter from the queue.
    In the case where only class 1 jobs
    arrive from the queue,
    filling the servers,
    the system transitions to state $s_2(b)$.
    In the 3-10-30 system, the transition from $[1, 4, 1]$ to $[0, 6, 1]$
    falls in this category.

    The other possibility is that a class 2 job may arrive from the queue,
    possibly after some class 1 jobs have arrived.
    The class 2 job must block the head of the queue,
    since there were previously at most $n_2 - 1$ unoccupied servers.
    Transitions in this category can take us to any state of the form $[1, a', b]$,
    where $a'$ is at least $a$ and less than the number of class 1 jobs in state $s_2(b)$.
    In the 3-10-30 system, the transitions from $[1, 4, 1]$ to $[1, 4, 1]$ and $[1, 5, 1]$
    fall into this category.

    That completes the proof for states of the form $[1, a, b]$,
    and hence the proof of \cref{lem:possible_transitions}.
    \end{proof}
    \subsection{Class 1 completions, no blocking job}
    \label{sec:class_1_no_blocking}
    Starting in state $[0, a, b]$,
    we want to show that the class 1 decomposed stationary equation holds:
    \[ p_1 \pi_{[0, a, b]} = \sum_{s'} \pi_{s'} P_1(s', [0, a, b]). \]
    To start with, we enumerate the states that can transition to $[0, a, b]$
    via the completion of a class 1 job.
    By inspecting \cref{lem:possible_transitions},
    we can see that transitions to non-blocking states
    via class 1 completions
    only occur in cases $i, ii,$ and $vi$.
    Case $i$ corresponds to the predecessor state $[0, a, b]$,
    while cases $ii$ and $vi$
    correspond to the potential predecessor states $[0, a+1, b-1]$
    and $[1, a+1, b-1]$, respectively.
    Only one of these states exists in a given system,
    and that state is referred to as $s_1(a+1)$.
    By \cref{lem:possible_transitions}, states $[0, a, b]$ and $s_1(a+1)$
    are the only possible states
    that could transition to $[0, a, b]$
    after a class 1 job completes.

    We will prove that the decomposed stationary equation
    holds both when $s_1(a+1)$ is $[0, a+1, b-1]$,
    and when $s_1(a+1)$ is $[1, a+1, b-1]$.

    To prove the steady state guess, we must show that
    \begin{align}
        \label{eq:class_1_setup}
        p_1 \pi_{[0, a, b]} = \pi_{[0, a, b]} P_1([0, a, b], [0, a, b]) +
        \pi_{s_1(a+1)} P(s_1(a+1), [0, a, b]).
    \end{align}
    Note that $P_1([0, a, b], [0, a, b]) = f_1([0, a, b]) p_1$, so \cref{eq:class_1_setup}
    simplifies to
    \begin{align}
        \label{eq:class_1_finish}
        p_1 \pi_{[0, a, b]} f_2([0, a, b]) = \pi_{s_1(a+1)} P(s_1(a+1), [0, a, b]).
    \end{align}
    Note that $P([0, a+1, b-1], [0, a, b]) = f_1([0, a+1, b-1]) p_2$
    and that $P([1, a+1, b-1], [0, a, b]) = f_1([1, a+1, b-1])$.
    By examining the steady state guess \cref{eq:steady_state_guess},
    we can see that in either case,
    \cref{eq:class_1_finish} must hold.
    As a result, the decomposed stationary equation must hold as well.

    \subsection{Class 1 completions, blocking job}
    \label{sec:short_class_1_blocking}
    This case is similar to the case in \cref{sec:class_1_no_blocking}
    and is proven in \cref{sec:class_1_blocking}.
    \subsection{Class 2 completions, no blocking job}
    \label{sec:class_2_no_blocking}
    Starting in state $[0, a, b]$,
    we want to show that the class 2 decomposed stationary equation holds:
    \[ p_2 \pi_{[0, a, b]} = \sum_{s'} \pi_{s'} P_2(s', [0, a, b]). \]

    To start with, we enumerate the states that can transition to $[0, a, b]$
    via the completion of a class 2 job.
    By inspecting \cref{lem:possible_transitions},
    we can see that transitions to non-blocking states via class 2 completions
    only occur in cases $iii, iv,$ and $vii$.
    Case $iii$ corresponds to the predecessor state $[0, a, b]$,
    with transition probability $f_2([0, a, b]) p_2$.
    Case $iv$ corresponds to the predecessor state $s_2(b+1)$.
    If we write $s_2(b+1)$ as $[0, a-i^*, b+1]$ for some $i^*$,
    then this case has transition probability $f_2(s_2(b+1)) p_1^{i^*}$.
    Case $vii$ corresponds to the set of predecessor states of the form
    $[1, a-i, b]$, where $1 \le i < i^*$,
    with transition probabilities $f_2([1, a-i, n]) p_1^i$.

    To prove the decomposed stationary equation,
    we must show that
    \begin{align*}
        p_2 \pi_{[0, a, b]} = \pi_{[0, a-i^*, b+1]} f_2([0, a-i^*, b+1]) p_1^{i^*}
            + \pi_{[0, a, b]} f_2([0, a, b]) p_2
            + \sum_{i=1}^{i^*-1} \pi_{[1, a-i, b]} f_2([1, a-i, b]) p_1^i\\
        \iff p_2 \pi_{[0, a, b]} f_1([0, a, b])
        = \pi_{[0, a-i^*, b+1]} f_2([0, a-i^*, b+1]) p_1^{i^*}
        + \sum_{i=1}^{i^*-1} \pi_{[1, a-i, b]} f_2([1, a-i, b]) p_1^i
    \end{align*}
    Let us apply an auxiliary lemma:
    \begin{lemma}
        \label{lem:telescope}
        For all $a, b$ and all $q \le r$ such that both $[1, a-q, b]$
        and $[1, a-r, b]$ are valid states,
        under the steady state guess in \cref{eq:steady_state_guess},
        \[
            \sum_{i=q}^r p_1^i \pi_{[1, a-i, b]} f_2([1, a-i, b])
            = p_1^q \pi_{[1, a-q, b]} - p_1^r \pi_{[1, a-r, b]}f_1([1, a-r, b]).
        \]
    \end{lemma}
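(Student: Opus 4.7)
The strategy is a telescoping argument on $i$. Define
\[ G(i) := p_1^i \, \pi_{[1, a-i, b]}, \]
and observe, by plugging into the steady state guess \cref{eq:steady_state_guess}, that the $p_1$-prefactors combine cleanly to give
\[ G(i) = C \, p_1^a \, p_2^{b+1} \prod_{k=1}^{a-i} \frac{1}{f_1(s_1(k))} \prod_{j=1}^b \frac{1}{f_2(s_2(j))}, \]
so that only the upper limit of the first product depends on $i$. Peeling off a single factor then yields the recursion $G(i+1) = G(i) \, f_1(s_1(a-i))$.

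The next step is to identify $f_1(s_1(a-i))$ with $f_1([1,a-i,b])$. For a fixed $b$, the condition that $[1,a',b]$ is a valid (blocking) state is precisely $n_1 \le n - n_1 a' - n_2 b < n_2$, which cuts out a contiguous range of $a'$. Because the two endpoints $[1,a-q,b]$ and $[1,a-r,b]$ are valid by hypothesis, every intermediate $[1,a-i,b]$ with $q \le i \le r$ is valid as well; since $s_1(\cdot)$ is uniquely defined, this forces $s_1(a-i) = [1,a-i,b]$ throughout. Using this together with $f_2 = 1 - f_1$, each summand collapses to a difference:
\[ p_1^i \, \pi_{[1,a-i,b]} \, f_2([1,a-i,b]) = G(i)\bigl(1 - f_1(s_1(a-i))\bigr) = G(i) - G(i+1). \]

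Summing from $i = q$ to $i = r$ telescopes to $G(q) - G(r+1)$. By construction $G(q) = p_1^q \, \pi_{[1,a-q,b]}$, and one application of the recursion gives
\[ G(r+1) = G(r) \, f_1(s_1(a-r)) = p_1^r \, \pi_{[1,a-r,b]} \, f_1([1,a-r,b]), \]
which matches the right-hand side of the claim exactly.

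The only subtle point, and thus the main (if minor) obstacle, is justifying the identification $s_1(a-i) = [1,a-i,b]$ across the whole range: if even one intermediate state failed to be blocking, the $f_1$ factor coming out of the product in the steady state formula would not agree with $f_1$ evaluated at a blocking state, and the telescoping would break. Once this is pinned down via the contiguity argument above, the remainder is bookkeeping.
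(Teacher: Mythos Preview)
Your proof is correct and takes essentially the same approach as the paper: both hinge on the recursion $p_1\,\pi_{[1,a-i-1,b]} = \pi_{[1,a-i,b]}\,f_1(s_1(a-i))$ read off from \cref{eq:steady_state_guess} together with $f_1+f_2=1$, with the paper phrasing this as induction on $r$ and you as a telescoping sum. If anything, your contiguity argument for why $s_1(a-i)=[1,a-i,b]$ holds across the full range is more explicit than the paper's treatment, which silently makes the same identification.
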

    \begin{proof}
    Deferred to \cref{sec:proof_telescope}.
    \end{proof}
    We apply \cref{lem:telescope}
    with $q=1$ and $r = i^*-1$.
    After doing so, our desired statement simplifies to
    \begin{align}
        \label{eq:step_after_telescope}
        p_2 \pi_{[0, a, b]} f_1([0, a, b]) &=
        \pi_{[0, a-i^*, b+1]} f_2([0, a-i^*, b+1]) p_1^{i^*}\\
        &+ p_1\pi_{[1, a-1, b]}
        -p^{i^*-1}\pi_{[1,a-(i^*-1),b]}f_1([1,a-(i^*-1),b]).
        \nonumber
    \end{align}
    By comparing the steady state guesses for states $[0, a-i^*, b+1]$
    and $[1, a-i^*+1, b]$, we see that
    \begin{align}
        \label{eq:last_step}
        \pi_{[0, a-i^*, b+1]} f_2([0, a-i^*, b+1]) p_1 = \pi_{[1, a-i^*+1, b]} f_1([1, a-i^*+1, b]).
    \end{align}
    Substituting \cref{eq:last_step} into \cref{eq:step_after_telescope},
    the two corresponding terms cancel,
    so we only need to prove that
    $p_2 \pi_{[0, a, b]} f_1([0, a, b]) = p_1\pi_{[1, a-1, b]}$.
    This follows immediately from the steady state guess,
    by comparing the expressions for states $[0, a, b]$ and $[1, a-1, b]$.
    Thus, the decomposed stationary equation holds in this case.

    \subsection{Class 2 completions, blocking job}
    \label{sec:short_class_2_blocking}
    This case is similar to the case in \cref{sec:class_2_no_blocking}
    and is proven in \cref{sec:class_1_blocking}.
    \subsection{Edge cases}
    \label{sec:short_edge_cases}
    In the preceding cases, we verified that the decomposed stationary equations hold,
    but in doing so we assumed that states of the form $s_1(a+1)$ and $s_2(b+1)$ exist.
    In certain edge-case states, this assumption does not hold,
    so we must verify that the decomposed stationary equations still hold.

    The main additional fact we will use is that
    if jobs of only one class are in service, then a job of that class must complete next.
    In other words, $f_1([\cdot, a, 0]) = 1$ and $f_2([\cdot, 0, b]) = 1$.
    The proof of these cases is deferred to \cref{sec:edge_cases}.

    \subsection{Completing the proof of \cref{thm:embedded_saturated}}

    Combining
    \cref{sec:class_1_no_blocking,sec:class_2_no_blocking,sec:short_class_1_blocking,sec:short_class_2_blocking,sec:short_edge_cases},
    we have now proven that the decomposed stationary equations must hold in every state.
    Thus, we have proven that the stationary equations hold in every state.
    Therefore, \cref{eq:steady_state_guess} gives the stationary distribution for the
    embedded chain of the saturated system.
\end{proof}

\section{Continuous-time saturated system: Proof of Theorem~\ref{thm:continuous_saturated}}
\label{sec:saturated}
\cref{thm:continuous_saturated} follows from a generic transformation between
the steady state of the embedded chain of a CTMC
and the steady state of the CTMC itself.
The stationary probability $\p_s$ of each state $s$ in the CTMC
is simply
the probability $\pi_s$ in the embedded DTMC
divided by the rate of departures from state $s$.
The formal proof is deferred to \cref{sec:saturated_proof}.

\section{Stability region: Proof of Theorem~\ref{thm:stability}}
\label{sec:standard}
\begin{reptheorem}{thm:stability}
    The original non-saturated system is stable with arrival rate $\lambda$ if
    $\lambda < \lambda^*$,
    where $\lambda^* = X$ is the throughput of the saturated system,
    and unstable if $\lambda > \lambda^*$.
\end{reptheorem}
    \subsection{Proof sketch}
    The main part of the proof will show that when $\lambda < X$,
    the non-saturated system is positive recurrent.
    Subsequently, we will show that when $\lambda > X$,
    the non-saturated system is transient.

    Assuming that $\lambda < X$,
    define the set $\mathcal{E}$
    to consist of the states in the non-saturated system
    with no non-blocking jobs in the queue.
    We call the states in $\mathcal{E}$ the ``near-empty'' states.
    Our goal is to prove that the system returns to $\mathcal{E}$
    with probability 1, and in finite mean time.
    We say that $\mathcal{E}$ is a positive-recurrent set if this property holds.

    In order to show that $\mathcal{E}$ is a positive-recurrent set,
    we define two additional coupled systems.
    We first define the Augmented Saturated System (AugSS),
    which consists of the ordinary saturated system
    and a counter whose value mirrors the number of jobs
    in the non-saturated system.

    From here, we will derive a long period of time $t_1$,
    such that the expected number of completions 
    in the AugSS over any interval of length $t_1$
    exceeds $\lambda t_1$.
    We will use this interval to define an embedded DTMC
    for the AugSS which samples the state every $t_1$ time steps.

    We will show that this embedded DTMC is positive recurrent,
    which in turn implies that the Augmented Saturated System is positive recurrent,
    and so $\mathcal{E}$ must be a positive-recurrent set for the original non-saturated system.
    
    \subsection{Proof of \cref{thm:stability}}

    \begin{proof}
        Let us assume that $\lambda < X$.
        We want to show that the non-saturated system is positive recurrent,
        by showing that $\mathcal{E}$ is a positive-recurrent set.

        We start by defining the Augmented Saturated System (AugSS),
        which consists of two parts:
        A saturated system, as described in \cref{sec:saturated_model},
        and an additional ``jobs counter.''
        The jobs counter increments according to a Poisson process with rate $\lambda$,
        and decrements when jobs complete in the saturated system.
        As an edge condition, the jobs counter cannot decrement below 0.

        We couple the AugSS to the original non-saturated system
        by matching up their departure and arrival processes.
        Whenever the two systems
        have the same numbers of jobs of each class in service,
        we couple the same class of job to depart simultaneously in both systems.
        Otherwise, departures occur independently in the two systems.
        If completions occur in both systems,
        requiring the sampling of the classes of jobs
        in both systems for entrance into service,
        we couple the sampling so that the jobs' classes match in the two systems.
        Furthermore, we couple together the arrivals
        to the non-saturated system and the increments of the job counter.

        With this coupling in place, we inaugurate the AugSS
        at the moment when the non-saturated system exits the set $\mathcal{E}$.
        At this point, we set the state of the saturated subsystem
        to match the jobs in the non-saturated system at the servers and 
        blocking the head of the queue.
        We also inaugurate the jobs counter to match the total number of jobs
        in the non-saturated system.

        From this point in time onwards, as long as the non-saturated system
        remains outside of the set $\mathcal{E}$,
        the set of jobs at the servers and blocking the head of the queue
        will remain identical in the non-saturated system
        and in the AugSS.
        Likewise, the value of the jobs counter
        will match the total number of jobs in the non-saturated system.

        To prove that $\mathcal{E}$ is a positive-recurrent set
        for the non-saturated system,
        it suffices to show that the set of states with job-counter equal to zero
        is a positive-recurrent set for the AugSS.
        By the time the job counter reaches zero,
        the non-saturated system must have entered a state in $\mathcal{E}$.

        Next, 
        we derive the period of time $t_1$
        that we will use to define the embedded DTMC of the AugSS.
        Because the saturated system
        is a finite-state CTMC,
        it must converge to its steady state.
        In particular, for any $\epsilon > 0$ there exists a time $t_0 := t_0(\epsilon)$ such that
        regardless of the starting state,
        at any time $t > t_0$,
        the probability that the saturated system
        is in a given state $s$ is within $\epsilon$ of the steady-state probability $\pi_s$.
        As a result, the expected completion rate over any interval which begins after time $t_0$
        must exceed
        $X - f(\epsilon),$
        where $f$ is a function that vanishes as $\epsilon$ approaches 0.

        In particular, suppose we choose $\epsilon$ such that 
        $X - f(\epsilon) = \lambda  + \delta$
        for some $\delta > 0$.
        Such an $\epsilon$ must exist, because $X > \lambda$
        and $f(\epsilon) \rightarrow 0$ as $\epsilon \rightarrow 0$.
        Then at a given time $t > t_0$,
        the expected number of jobs completed in the saturated system is at least
        $(t - t_0) (\lambda + \delta).$

        Let $C(t)$ denote the number of jobs completed in the saturated system
        by time $t$,
        and let $A(t)$ denote the number of jobs which arrive to the non-saturated system
        by time $t$ (or equivalently, the number of times the jobs counter increments).
        Let $t_1$ be the time
        \[ t_1 = \frac{t_0 (\lambda+\delta) + 1}{\delta}. \]
        Then we can lower bound the expected number of jobs completed by time $t_1$:
        \begin{align*}
            E[C(t_1)] &\ge (t_1 - t_0) (\lambda + \delta)
            = \left( \frac{t_0 (\lambda+\delta) + 1}{\delta} - t_0 \right)(\lambda + \delta)
            = \frac{\lambda t_0 (\lambda + \delta) + \lambda + \delta}{\delta}
            = \lambda t_1 + 1
        \end{align*}
        Thus, $E[C(t_1)]$ is strictly greater than $E[A(t_1)]$,
        which equals $\lambda t_1$.
        More generally, because the above argument holds when starting from an
        arbitrary initial state,
        the expected number of completions over any interval of length $t_1$
        must exceed the expected number of arrivals over that interval.

        Let us define the embedded Markov chain of the AugSS,
        or the ``embedded chain,''
        to sample the
        AugSS's state every $t_1$ time steps.
        The expected number of completions between every update of this DTMC
        exceeds the expected number of arrivals.
        We will use this property of the embedded chain
        to employ Foster's theorem \cite{foster_stochastic},
        which will show that
        the embedded chain is positive recurrent.
        In particular, we will use the value of the jobs counter
        as the Lyapunov function for Foster's theorem.
        
        \begin{lemma}
            \label{lem:foster}
            Let $V$ be Lyapunov function which maps each state in the
            embedded chain
            to the value of its jobs counter.
            Then $V$ satisfies the conditions of Foster's theorem,
            showing that the embedded chain is positive recurrent.
        \end{lemma}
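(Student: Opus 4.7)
The plan is to apply Foster's theorem to the embedded chain of the AugSS directly, taking as Lyapunov function the jobs counter $V$ itself. States of the embedded chain are pairs $(s, v)$ where $s$ lies in the finite state space of the saturated subsystem and $v \in \mathbb{N}$ is the counter value. Foster's theorem asks for (i) a finite exception set $B$ and (ii) a uniform drift bound $E[V(X_{n+1}) - V(X_n) \mid X_n = x] \le -\epsilon$ for all $x \notin B$, plus (iii) $E[V(X_{n+1}) \mid X_n = x] < \infty$ for $x \in B$. Condition (iii) is immediate since in one step of length $t_1$ the counter can grow only by the number of Poisson$(\lambda)$ arrivals.

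The main step is bounding the drift. Over one step the counter receives $A := A(t_1)$ arrivals and $C := C(t_1)$ completion ``attempts,'' where some attempts may be wasted because the counter cannot decrement below zero. Writing $W$ for the number of wasted completions, I would record the identity $V(X_{n+1}) = V(X_n) + A - C + W$ and the pathwise bound $W \le (C - V(X_n))_+$, valid because wastage cannot begin before the counter is exhausted. Taking conditional expectations,
\[ E[V(X_{n+1}) - V(X_n) \mid X_n = (s,v)] \le E[A] - E[C \mid s] + E[(C - v)_+ \mid s]. \]
The estimate $E[C(t_1) \mid s] \ge \lambda t_1 + 1$ established in the paragraphs preceding the lemma is uniform in the starting state $s$, so the right-hand side is at most $-1 + E[(C - v)_+ \mid s]$.

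To finish, I would use tail control on $C$ to shrink $E[(C-v)_+ \mid s]$ uniformly in $s$. Since $C(t_1)$ is dominated by the total number of transitions of the saturated CTMC in time $t_1$, which in turn is dominated by a Poisson random variable with rate $\max_{s} (a\mu_1 + b\mu_2) \cdot t_1$, $C$ has a bounded exponential moment and in particular $E[(C-v)_+] \to 0$ as $v \to \infty$ uniformly in $s$. Hence one can choose $M$ large enough that $E[(C-M)_+ \mid s] \le \tfrac{1}{2}$ for every $s$, and take $B = \{(s, v) : v < M\}$, which is finite because the saturated system has finitely many states. Outside $B$ the drift is at most $-\tfrac{1}{2}$, and the conditions of Foster's theorem are met.

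The main obstacle is exactly the nonlinearity caused by the counter's floor at zero: in its absence, the drift would simply be $\lambda t_1 - E[C(t_1)] \le -1$, but wastage inflates the next value above $V(X_n) + A - C$ and must be absorbed. The pathwise inequality $W \le (C - V(X_n))_+$ together with uniform integrability of $C$ is what makes this absorption possible, letting us localize all the wastage inside a finite exception set while preserving a strictly negative drift elsewhere.
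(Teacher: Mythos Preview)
Your proof is correct and follows essentially the same approach as the paper: both use the jobs counter as the Lyapunov function, both recognize the wastage term $W = C_0$ arising from the floor at zero as the only obstacle, and both construct the exception set as the finite collection of states with small counter value. The one technical difference is in how the wastage is controlled: the paper argues informally that the counter is unlikely to reach zero when $v$ is large (so $E[C_0] \le 0.1$ for $v \ge c$), whereas you make this quantitative via the pathwise inequality $W \le (C - v)_+$ together with Poisson domination of $C$; your route is slightly more explicit but the underlying idea is identical.
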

        \begin{proof}
            Deferred to \cref{app:foster}.
        \end{proof}

        Because the embedded chain is positive recurrent,
        the AugSS must also be positive recurrent.
        This implies that $\mathcal{E}$ forms a positive-recurrent set
        for the original non-saturated system, as desired.
        
        To show that the non-saturated system is unstable for $\lambda > X$,
        we can use a very similar argument.
        The only difference is that
        we can now show that every $t_1$ time steps,
        the expected number of arrivals exceeds the expected number of arrivals.
        As a result, the embedded DTMC for the AugSS is unstable,
        and so the AugSS is unstable as well,
        and so the original system is also unstable.
\end{proof}

\section{A simpler proof of Rumyantsev and Morozov \cite{rumyantsev_stability}}
\label{sec:simplify}

Our method of analyzing the stability region of multiserver-job systems
can be applied beyond the two-class model studied in this paper.
In this section,
we show how our method can be used to derive the stability region
of the multiserver-job model in \cite{rumyantsev_stability},
where jobs can require any number of servers,
but all jobs require the same exponential distribution of service time,
regardless of the number of servers required.
Our method has the advantage of providing clearer intuition
into the nature of the stability region.
Moreover, we will essentially reuse \cref{thm:continuous_saturated,thm:stability},
allowing the portion of the proof which is specific to the model
from \cite{rumyantsev_stability}
to be considerably simpler.

The system in \cite{rumyantsev_stability} has $n$ servers\footnote{%
    This parameter is denoted $s$ in \cite{rumyantsev_stability}.
}, where
jobs arrive according to a Poisson process with rate $\lambda$.
Each job requires $k$ servers with i.i.d. probability $p_k$,
and requires $\Exp(\mu)$ time to complete.
We will refer to a job requiring $k$ servers as a ``class $k$ job.''
Jobs are served FCFS, as in this work.

The state descriptor
consists of two parts:
The ``phase vector'' $m$ of the number of servers required by the
$n$ oldest jobs present in the system
(padded with empty entries if necessary),
and
the number of other jobs in the queue.
Let $m_i$ denote the number of servers
required by the $i$th oldest job in the queue.
The number of jobs in service
is defined to be $\sigma(m)$.
Let $\mathcal{M}$ denote the set of phase vectors
where $n$ jobs are present,
so no padding is needed.

With the model specified,
we may state the stability region result:
\begin{theorem}
    \label{thm:simplify}
    The (non-saturated) system is positive recurrent if
    \begin{align}
        \label{eq:rm_stability}
        \frac{\lambda}{\mu} \sum_{m \in \mathcal{M}} \frac{\prod_{j=1}^n p_{m_j}}{\sigma(m)}
        <1
    \end{align}
    and unstable if ``<'' is replaced by ``>''.
\end{theorem}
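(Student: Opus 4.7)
The plan is to mirror the three-step structure used for the two-class model: first, derive the stationary distribution of the embedded DTMC of the saturated version of this system; second, apply the generic continuous-time conversion of \cref{thm:continuous_saturated} to obtain the saturated throughput $X$; and third, invoke an appropriate generalization of \cref{thm:stability} to conclude that $\lambda^{*} = X$, from which \cref{eq:rm_stability} follows by rearrangement.

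For the embedded chain, my guess is the clean product form
\[ \pi_m \;=\; \prod_{j=1}^{n} p_{m_j}, \qquad m \in \mathcal{M}, \]
which is already a probability distribution on $\mathcal{M}$ because summing $\prod_j p_{m_j}$ over all length-$n$ class sequences gives $1$. The intuition is a coupling: realize the saturated queue as an infinite sequence of jobs whose classes are i.i.d.\ draws from $p$, so that the phase vector at any time records a window of $n$ consecutive jobs from this sequence. A completion at an in-service position $i \le \sigma(m)$ deletes that entry, shifts positions $i+1,\ldots,n$ up by one, and pulls a fresh i.i.d.\ class into position $n$; this operation preserves the i.i.d.\ product measure on the length-$n$ window, so the measure is stationary. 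The formal verification is a balance-equation check: predecessors of $m$ in which the completing job sat at position $i$ have the form $m' = (m_1,\ldots,m_{i-1},m'_i,m_i,\ldots,m_{n-1})$ with $m'_i$ arbitrary subject to $\sigma(m')\ge i$, and contribute $\pi_{m'} \cdot p_{m_n}/\sigma(m')$. After dividing by $p_{m_n}$ and regrouping by the removed position, the sum telescopes to $\prod_{j=1}^{n-1}p_{m_j}$, exactly matching $\pi_m / p_{m_n}$.

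Given $\pi_m$, \cref{thm:continuous_saturated} immediately yields $X$: the exit rate from state $m$ in the continuous chain is $\sigma(m)\mu$, so
\[ X \;=\; \Bigl(\sum_{m \in \mathcal{M}} \frac{\pi_m}{\sigma(m)\,\mu}\Bigr)^{-1} \;=\; \mu \,\Big/\, \sum_{m \in \mathcal{M}} \frac{\prod_j p_{m_j}}{\sigma(m)}. \]
Substituting into $\lambda < X$ recovers \cref{eq:rm_stability}. The main obstacle, I expect, will be writing the embedded-chain balance check cleanly: the coupling makes the product form intuitively obvious, but translating this into a balance-equation argument requires carefully enumerating predecessor states and tracking how $\sigma(m')$ depends on the inserted class $m'_i$. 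A secondary concern is verifying that the proof of \cref{thm:stability} transports without change; inspecting \cref{sec:standard}, the only ingredients used are the finiteness of the saturated state space, the AugSS coupling between arrivals and the jobs counter, and the Foster-type argument of \cref{lem:foster}, none of which is tied to having exactly two classes, so this step should be essentially a restatement.
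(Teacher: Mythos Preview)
Your three-step plan matches the paper's proof exactly, down to the product-form guess $\pi_m = \prod_{j=1}^n p_{m_j}$ for the embedded saturated chain and the reuse of \cref{thm:continuous_saturated,thm:stability}. For the balance-equation check (the obstacle you flagged), the paper groups predecessors by the \emph{class} $k$ of the departing job rather than by its position: for fixed $k$, the predecessors $m^k(1),\ldots,m^k(\sigma(m^k(1)))$ all serve the same multiset of jobs and hence share the same $\sigma$, so the sum over positions collapses to $\pi_m\,p_k$, and summing over $k$ gives $\pi_m$---this is the clean resolution, whereas grouping by position leaves $\sigma(m')$ varying with the inserted class and does not ``telescope'' directly.
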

Note that we do not address the case of exact equality (which \cite{rumyantsev_stability} does).
We believe that our method
could be extended to cover this case,
but it would significantly complicate the proof.

Let us define the saturated system
to always contain exactly $n$ jobs,
so the state of the saturated system
corresponds to a phase vector in $\mathcal{M}$.
We will define the embedded Markov chain as usual,
transitioning on each departure.
The embedded Markov chain of the saturated system
has an extremely simple product-form steady-state distribution:
\begin{theorem}
    \label{thm:rm_embedded}
    The steady state distribution of the embedded DTMC of the saturated system is:
    \[ \pi_m = \prod_{j=1}^n p_{m_j}. \]
\end{theorem}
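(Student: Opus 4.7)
The plan is to verify directly that $\pi_m = \prod_{j=1}^n p_{m_j}$ satisfies the stationary equations of the embedded DTMC, exploiting the fact that all jobs share a common service rate $\mu$. First, since every job in service completes at rate $\mu$, the next completion is uniformly distributed over the $\sigma(m)$ jobs in service; after the departure at position $i$, the remaining jobs shift forward to fill the gap, and a new class-$k$ job is appended at position $n$ with probability $p_k$. The embedded transition probability is thus $P(m, m^{(i,k)}) = p_k / \sigma(m)$, where $m^{(i,k)}$ denotes $m$ with position $i$ deleted and $k$ appended at the end.

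Next, I would fix a target state $m' \in \mathcal{M}$ and parametrize each predecessor as $m(i, x) := (m'_1, \ldots, m'_{i-1}, x, m'_i, \ldots, m'_{n-1})$ for some insertion position $i \in \{1, \ldots, n\}$ and some class $x$; this predecessor is \emph{valid} only when $i \le \sigma(m(i, x))$. The product-form guess gives $\pi_{m(i,x)} = p_x \prod_{j=1}^{n-1} p_{m'_j}$, so the balance equation $\pi_{m'} = \sum_{(i,x)\text{ valid}} \pi_{m(i,x)} \cdot p_{m'_n}/\sigma(m(i,x))$ reduces, after dividing through by $p_{m'_n} \prod_{j=1}^{n-1} p_{m'_j}$, to the combinatorial identity
\[ \sum_x p_x \sum_{i=1}^n \frac{\mathbf{1}[i \le \sigma(m(i,x))]}{\sigma(m(i,x))} = 1. \]

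The heart of the proof is showing that for each class $x$, the inner sum equals $1$. The key observation is that the sum of the first $\ell$ entries of $m(i, x)$ equals $(m'_1 + \cdots + m'_{\ell-1}) + x$ whenever $i \le \ell$, independently of exactly where $x$ is inserted within those $\ell$ entries. It follows that $\sigma(m(i, x))$ takes a common value, call it $\sigma^*(x)$, for every $i$ in the valid range, and that the valid range is exactly $\{1, \ldots, \sigma^*(x)\}$. Hence the inner sum is $\sigma^*(x) \cdot 1/\sigma^*(x) = 1$, and summing over $x$ yields $\sum_x p_x = 1$.

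The main obstacle is this constancy-of-$\sigma$ claim: one must verify both that $\sigma(m(i, x))$ does not depend on $i$ within the valid range and that the valid range is a contiguous prefix of the form $\{1, \ldots, \sigma^*(x)\}$ (rather than a more complicated subset of indices), which requires tracking how the partial-sum capacity cutoff shifts under insertions at different positions. Everything else is routine, which is why this saturated-system approach yields a proof of \cref{thm:simplify} that is substantially simpler than the original analysis of \cite{rumyantsev_stability}.
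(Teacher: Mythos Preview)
Your proposal is correct and follows essentially the same route as the paper's proof: both enumerate the predecessors of a target state $m'$ by the class $x$ of the completing job and its position $i$ in the phase vector, both rely on the key fact that $\sigma(m(i,x))$ is constant over the valid range $\{1,\ldots,\sigma^*(x)\}$, and both conclude by summing $\sum_x p_x = 1$. The paper states the constancy fact in one line (``because the same jobs are served in each state''), whereas you justify it more carefully via the partial-sum observation; conversely, the paper is slightly more explicit that identical predecessor states arising from different insertion positions must be counted with multiplicity (your notation $P(m,m^{(i,k)}) = p_k/\sigma(m)$ is really the probability of the \emph{event} ``position $i$ completes and class $k$ arrives,'' not the total transition probability to $m^{(i,k)}$, which may be larger if several positions lead there).
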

\begin{proof}
    Let us consider the balance equation for a given state $m$:
    \begin{align}
        \label{eq:rm_balance}
        \pi_m = \sum_{m' \in \mathcal{M}} \pi_{m'} P(m', m).
    \end{align}
    For each possible number of servers required $1 \le k \le n$,
    a state could transition to $m$ via the completion of a class $k$ job.
    Let $m^k(1)$ denote one specific predecessor state to $m$,
    the state $m^k(1) = [k, m_1, m_2, \ldots, m_{n-1}]$.
    Starting in state $m^k(1)$, if the class $k$ job in position 1 completes,
    and then a class $m_n$ job arrives, the system transitions to state $m$.

    Several more states can transition to state $m$
    via the completion of a class $k$ job.
    If a state has a class $k$ job in service,
    and the other $n-1$ jobs in the phase vector
    are in classes $m_1$ through $m_{n-1}$, in that order,
    then the state will transition to state $m$
    if the class $k$ job completes.
    In total, there are $\sigma(m^k(1))$ such states,
    because the class $k$ job will be in service
    if and only if it is in one of the positions
    that received service in state $m^k(1)$.\footnote{
        We will count otherwise identical states where the inserted job
        is inserted in distinct locations separately in this proof.
    }
    We will call these states $m^k(i)$, for $1 \le i \le \sigma(m^k(1))$:
    \begin{align*}
        m^k(1) &= [k, m_1, m_2, \ldots, m_{n-1}],
        m^k(2) = [m_1, k, m_2, \ldots, m_{n-1}],
        \ldots\\
        m^k(\sigma(m^k(1))) &= [m_1, m_2, \ldots, m_{\sigma(m^k(1))-1},
        k, m_{\sigma(m^k(1))}, \ldots, m_{n-1}]
    \end{align*}
    Note that $\sigma(m^k(i)) = \sigma(m^k(1))$
    for all $i \le \sigma(m^k(1))$,
    because the same jobs are served in each state.

    From state $m^k(i)$, the probability $P(m^k(i), m)$ of transitioning
    to state $m$ after a departure is
    $\frac{p_{m_n}}{\sigma(m^k(1))}.$
    This holds because each of the $\sigma(m^k(i)) = \sigma(m^k(1))$ jobs in service
    are equally likely to depart,
    because each job has the same exponential completion rate.
    If the $i$th oldest job departs, a class $m_n$ job must then arrive
    to complete the transition to state $m$.

    The steady state probability of state $m^k(i)$ under the steady state guess is
    $\pi_{m^k(i)} = \pi_m \frac{p_k}{p_{m_n}}$.
    Thus,
    \[ \sum_{i=1}^{\sigma(m^k(1))} \pi_{m^k(i)} P(m^k(i), m) = \pi_m p_k. \]
    Summing over all classes $k$,
    we see that the balance equation \cref{eq:rm_balance} must hold.
    Since $m$ was chosen arbitrarily,
    the balance equation holds for all states,
    and the steady state distribution is correct.
\end{proof}

Via the equivalent of \cref{thm:continuous_saturated},
because each saturated state $m$ has departure rate $\mu \sigma(m)$,
we find that the saturated system has throughput
\[ X = \left( \sum_{m \in \mathcal{M}} \frac{\prod_{j=1}^n p_{m_j}}{\mu \sigma(m)} \right)^{-1}. \]
Via the equivalent of \cref{thm:stability},
the non-saturated system is stable if $\lambda < X$
    and unstable if the opposite inequality holds,
which proves \cref{thm:simplify}.
Thus, we can now see that the simple form of the stability result 
in \citep{rumyantsev_stability}
emerges directly from the simple product-form steady state
of the  embedded Markov chain of the saturated system.

\section{Lessons learned about wastage}
\label{sec:lessons}

Our analytical results, proven in
\cref{thm:embedded_saturated,thm:continuous_saturated,thm:stability},
give us a closed-form expression for the stability region of the two-class multiserver-job system.
These results enable us to 
derive important insights into the behavior of such systems,
across a variety of parameter regimes.
All of the plots in this section are consequences of our analytical formula,
with no simulation needed.

\subsection{Number of servers wasted can approach $n_2$}
\label{sec:number_wasted}
Our first key insight is that the number of servers wasted in steady state can approach $n_2$.
In any given state,
at most $n_2-1$ servers can be wasted,
because if $n_2$ servers are free, any job can fit.
Our results show that under some conditions,
most of those $n_2$ servers can indeed be wasted in steady state,
yielding very high wastage and very poor utilization,
especially if $n_2$ is close to $n$.

\begin{figure}
    \begin{tabular}{cc}
        (a) \bm{$n_1 = 1, \, n_2 = 100, \, n=200$} &
        (b) \bm{$n_1 = 1, \, n_2 = 200, \, n=200$}\\
        \includegraphics[width=0.4\textwidth]{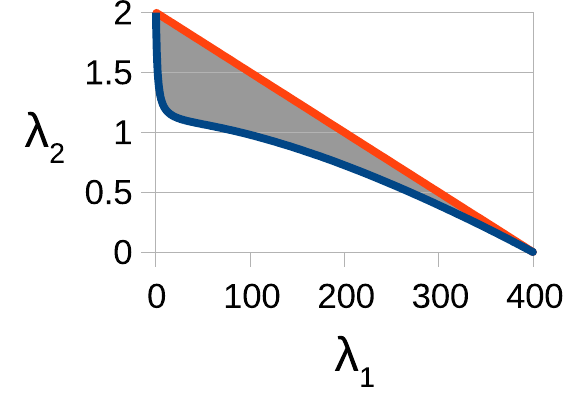} &
        \includegraphics[width=0.4\textwidth]{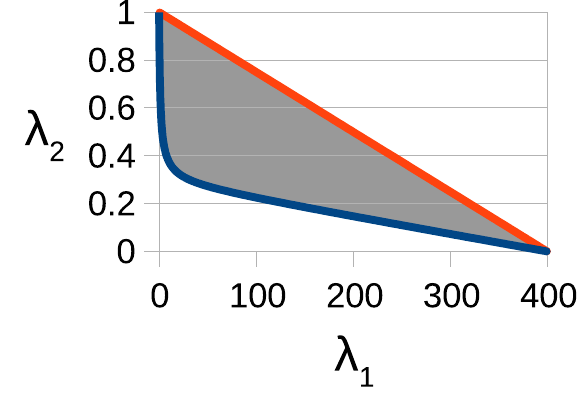}
    \end{tabular}
    \vspace{-16pt}
    \caption{The blue line depicts the stability region
    of the multiserver-job system, as the arrival rate of jobs in each class varies.
    For comparison, the red line
    shows where the stability region would lie if jobs could always be packed perfectly onto servers,
    so no servers were ever wasted.
    The grey shaded region shows the set of arrival rates where the system is unstable due to wastage.
    In each system, $\mu_1 = 2$ and $\mu_2 = 1$.}
    \label{fig:high_wastage}
\end{figure}

For example, consider the two systems depicted in \cref{fig:high_wastage}.
In both cases, the stability regions achieved are far from ideal.
The red lines illustrate the ``naive'' stability region,
i.e. where the stability region would lie
if jobs were always packed perfectly onto servers,
so no server was ever wasted.
The shaded region illustrates the reduction in the stability region due to wastage.
The reduction can be relatively moderate or more severe,
depending on the proportion of jobs which are in each class,
$p_1 = \lambda_1/\lambda$ and $p_2 = \lambda_2/\lambda$.
In \cref{fig:high_wastage}(a),
wastage reaches a peak when $p_2 \sim 6.4\%$,
meaning that $93\%$ of server-seconds are demanded by class 2 jobs
(since they require so many more servers).
In this case, $77$ servers are wasted in steady state, nearing $n_2 = 100$,
which results in a system utilization of only $61\%$.
\Cref{fig:high_wastage}(b) depicts an even more extreme situation, as $n_2 = n$.
Here, wastage reaches a peak when $p_2 \sim 1.3\%$,
meaning that $84\%$ of server-seconds are demanded by class 2 jobs.
In this case, $125$ servers are wasted in steady state, compared to $n_2 = 200$,
which results in a system utilization of only $37\%$.

In general, we observe that wastage is at its worst when
$n_2$ nears $n$,
when most jobs are class 1 jobs,
and when most server-seconds are demanded by class 2 jobs.
For wastage to be at its worst,
class 1 jobs must be common enough to consistently prevent
class 2 jobs from fully utilizing the servers,
but rare enough
to not heavily utilize the servers themselves.
Note that in either system in \cref{fig:high_wastage},
if all jobs come from just one class,
no wastage occurs.
Wastage only arises from the interleaving of class 1 and class 2 jobs,
with both classes of jobs blocking each other from service.

\subsection{Wastage falls as $\mu_2/\mu_1$ falls if $n_2$ divides $n$}
\label{sec:wastage_falls}

Our next insight is that wastage is highly dependent
on the relative service rates of the two classes.
Throughout this section we assume that $n_2$ divides $n$,
which is common in production systems
where the number of servers required by a job
and the total number of servers
are often powers of two.
If class 2 jobs require much more time
than class 1 jobs, i.e. if $\mu_2 \ll \mu_1$,
then few servers will be wasted.
If $\mu_1 \simeq \mu_2$ or if $\mu_1 \ll \mu_2$,
then more servers can be wasted, potentially approaching $n_2$.

This result partially rests on the fact that
if we hold constant all aspects of a system other than the service rates
(i.e. if we hold $n_1, n_2, n, p_1$ and $p_2$ constant),
then the number of wasted servers
is only a function of the service rate ratio $\mu_2/\mu_1$,
regardless of the absolute service rates $\mu_1$ and $\mu_2$.

\begin{claim}
    \label{clm:ratio}
    For a given ratio $\mu_2/\mu_1$,
    and for given values of $n_1, n_2, n, p_1$, and $p_2$,
    the long-term average number of wasted servers in the saturated system
    is not dependent on the specific values of $\mu_1$ and $\mu_2$.
\end{claim}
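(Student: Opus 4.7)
The plan is to trace the dependence of $\p_{[h,a,b]}$ on $(\mu_1,\mu_2)$ through the expressions provided by \cref{thm:embedded_saturated,thm:continuous_saturated}, and then conclude for wastage. Because the state space of the saturated system depends only on $n_1, n_2, n$, and because the number of idle servers in state $[h,a,b]$ is simply $n - a n_1 - b n_2$, the long-term average wastage is
\[
E[N_\idle] \;=\; \sum_{[h,a,b]} \p_{[h,a,b]} \bigl(n - a n_1 - b n_2\bigr),
\]
so it suffices to show that each $\p_{[h,a,b]}$ is invariant under a common rescaling $(\mu_1,\mu_2) \to (\alpha\mu_1, \alpha\mu_2)$ for any $\alpha > 0$.

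First I would handle the embedded DTMC. From \cref{eq:def_f}, each $f_1(s)$ and $f_2(s)$ is a ratio of the form $a\mu_1/(a\mu_1 + b\mu_2)$ or $b\mu_2/(a\mu_1+b\mu_2)$, in which the common $\alpha$ cancels. Hence $f_1$ and $f_2$ depend on the rates only through $\mu_2/\mu_1$. Since the expressions for $\pi_{[0,a,b]}$ and $\pi_{[1,a,b]}$ in \cref{eq:steady_state_guess} involve $\mu_1,\mu_2$ solely through the factors $f_1(s_1(i))$ and $f_2(s_2(j))$, the embedded stationary distribution $\pi$ is invariant under the common scaling, and therefore depends on $(\mu_1,\mu_2)$ only through the ratio.

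Next I would push this invariance through the CTMC conversion of \cref{thm:continuous_saturated}, which gives
\[
\p_{[h,a,b]} \;=\; X \cdot \frac{\pi_{[h,a,b]}}{a\mu_1 + b\mu_2},
\qquad
X \;=\; \left( \sum_{[h,a,b]} \frac{\pi_{[h,a,b]}}{a\mu_1 + b\mu_2} \right)^{-1}.
\]
Under the scaling $(\mu_1,\mu_2) \to (\alpha\mu_1,\alpha\mu_2)$ the denominator $a\mu_1 + b\mu_2$ scales by $\alpha$ (uniformly across all states, including the edge cases $a=0$ or $b=0$), so the normalizer $X$ scales by $\alpha$ as well. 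The two factors of $\alpha$ cancel in $\p_{[h,a,b]}$, leaving $\p$ unchanged. Combined with the invariance of $\pi$ established above, this shows that $\p_{[h,a,b]}$ depends on $(\mu_1,\mu_2)$ only through $\mu_2/\mu_1$, and the claim then follows from the wastage formula displayed at the start.

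The argument is essentially an algebraic observation, so there is no real obstacle — only the (trivial) bookkeeping to check that the scaling factor cancels both in the $f_i$'s feeding into $\pi$ and in the $1/(a\mu_1 + b\mu_2)$ factor of the CTMC conversion. Conceptually, the content of the claim is that only the relative speed of the two classes affects the stationary occupancy distribution of the saturated system, while the absolute speed sets only the overall time scale (and correspondingly rescales the throughput $X$, which is not part of the wastage).
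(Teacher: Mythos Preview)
Your proposal is correct and matches the paper's first proof essentially step for step: show $f_1,f_2$ (and hence $\pi$) depend only on the ratio, then observe that the $1/(a\mu_1+b\mu_2)$ factor and the normalizer $X$ scale identically, so $\p$ is invariant. The paper additionally offers a one-line second proof via time rescaling (multiplying both rates by $c$ just speeds up the saturated CTMC by $c$, leaving the stationary distribution unchanged), which you allude to in your final sentence.
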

\begin{proof}
    Deferred to \cref{app:ratio}.
\end{proof}
To gain intuition into why wastage falls as $\mu_2/\mu_1$ falls
when $n_2$ divides $n$,
we will examine our analytical results in two asymptotic regimes.
As $\mu_2/\mu_1 \rightarrow 0$,
i.e. class 2 jobs require much more time than class 1 jobs,
the steady-state concentrates on the state with only class 2 jobs present,
namely $s_1(0)$.
In this state, $n\mod n_2$ servers are wasted,
which equals zero when class 2 jobs perfectly pack the available servers.
In general, (i.e. when $n_2$ does not perfectly divide $n$)
the steady state wastage will approach $n\mod n_2$.

As $\mu_2/\mu_1 \rightarrow \infty$,
the situation is much more complicated.
The steady state concentrates on the set of states with only class 1 jobs in service.
There are several such states, due to the possibility of a blocking job.
Looking at the stationary distribution, we see that
the most common such state is the state with the fewest class 1 jobs present,
while still having only class 1 jobs in service.
This state has $\lceil \frac{n-n_2+1}{n_1} \rceil$ class 1 jobs in service.
If $n_2$ is much larger than $n_1$,
the distribution of additional class $1$ jobs beyond the minimum is roughly a geometric distribution
with probability of continuation $p_1$.
Therefore, the number of wasted servers when $n_2$ is much larger than $n_1$
is approximately $n_2 - n_1/p_2$.
Therefore, we see that in the two extremes (when $n_2$ divides $n_1$),
wasted servers move from $n_2 - n_1/p_2$ as $\mu_2/\mu_1 \to \infty$
to zero, as $\mu_2/\mu_1 \to 0$.
We see this in \cref{fig:lessons_waste}.

\begin{figure}
\begin{tabular}{cc}
    (a) \bm{$n_1 = 1, \, n_2 = 10, \, n=30$} &
    (b) \bm{$n_1 = 1, \, n_2 = 100, \, n=200$}\\
    \includegraphics[width=0.4\textwidth]{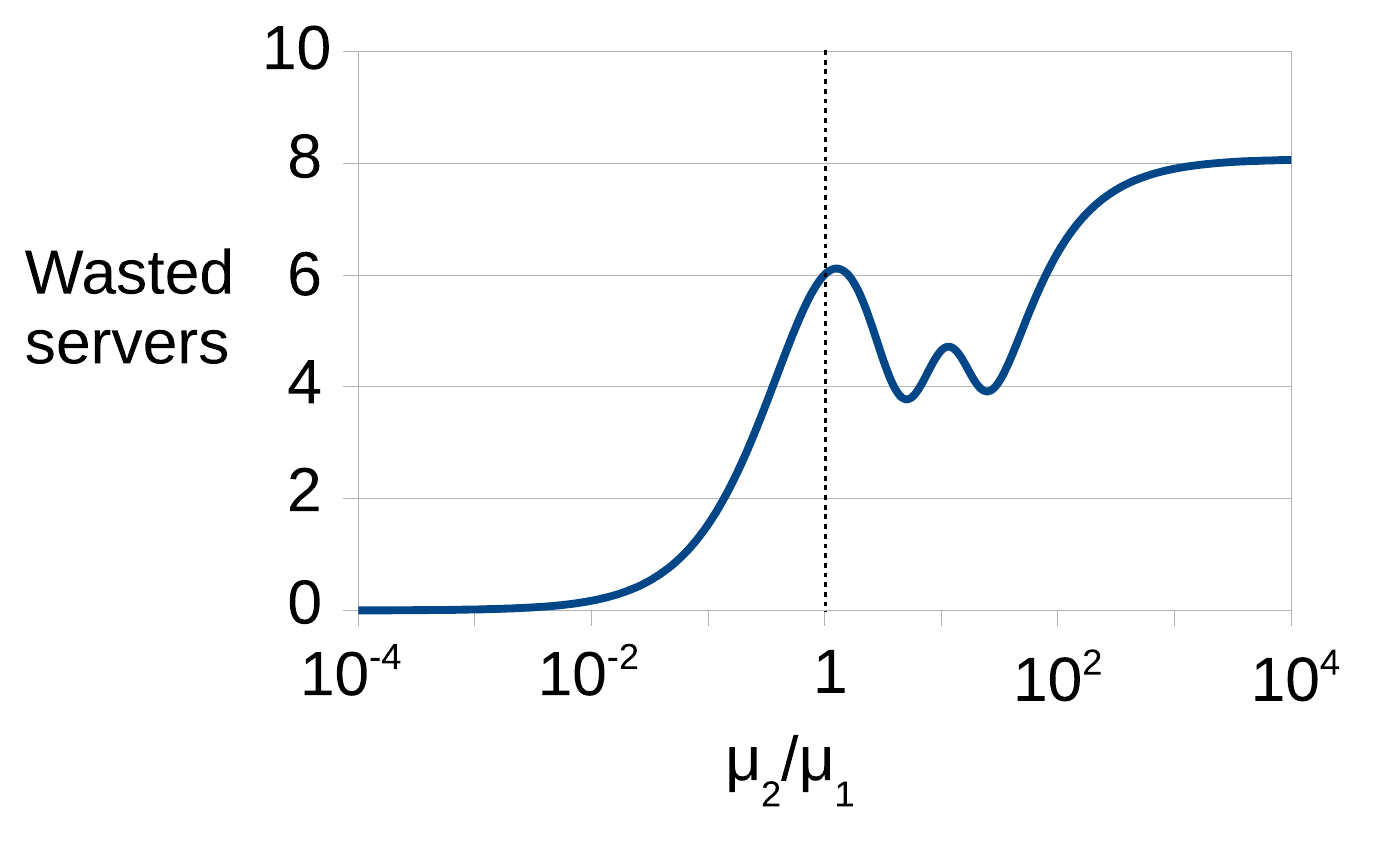} &
	\includegraphics[width=0.4\textwidth]{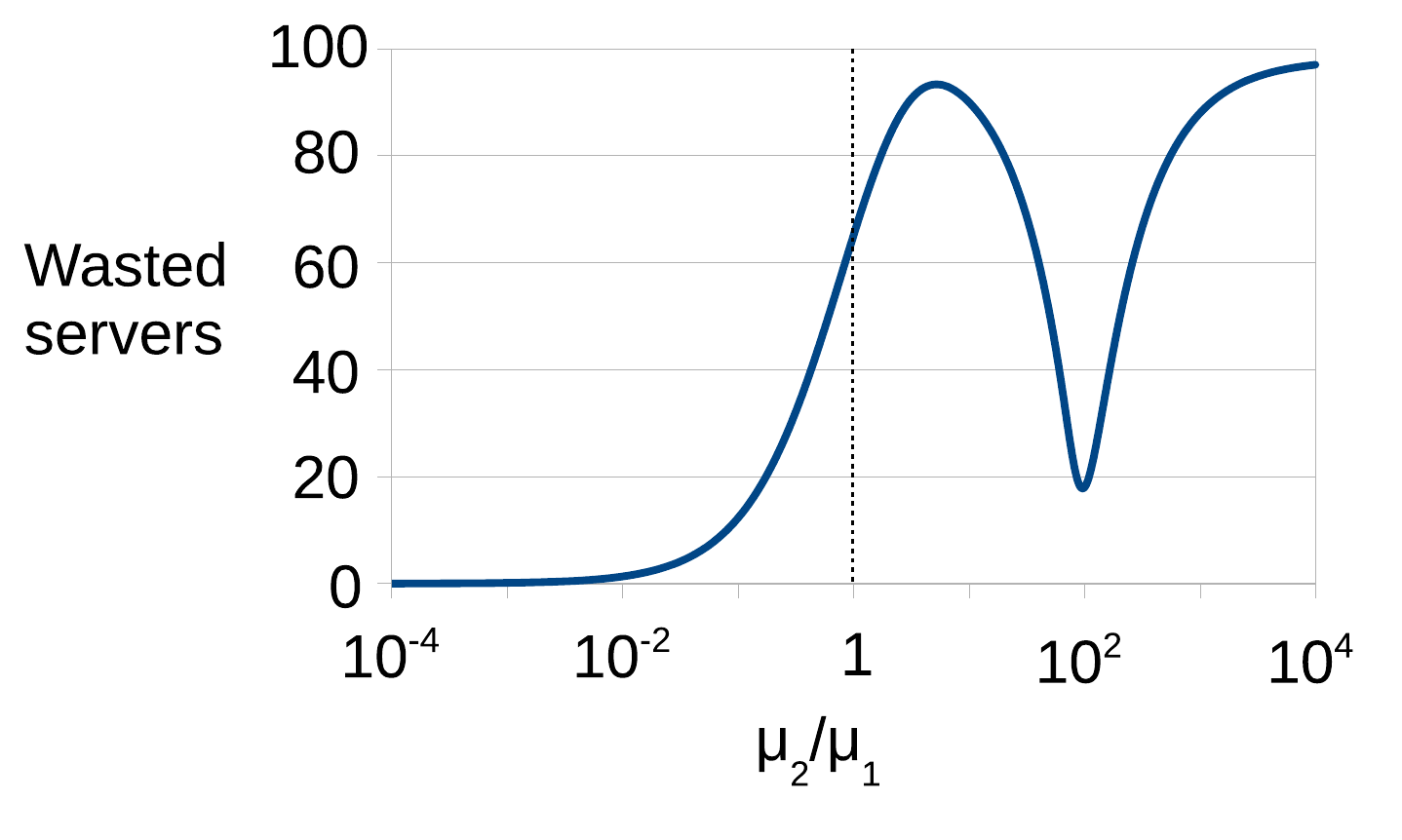}
\end{tabular}
    \vspace{-16pt}
    \caption{
Number of wasted servers in the saturated system in steady state
    as a function of the service rate ratio $\mu_2/\mu_1$,
in systems with $p_1 = p_2 = 0.5$.
    }
    \label{fig:lessons_waste}
\end{figure}

We also chart the number of wasted servers
as a function of the service rate ratio $\mu_2/\mu_1$
with fixed $p_1$ and $p_2$,
as shown in \cref{fig:lessons_waste}.
In both figures, few servers are wasted when $\mu_2/\mu_1$ is very small.
As $\mu_2/\mu_1$ increases to $1$, wastage grows to $60\%$ of $n_2$ in these cases.
As $\mu_2/\mu_1$ rises further,
many local optima and pessima exist,
but as the ratio gets very high, wastage stabilizes at almost all of $n_2$,
or more specifically $n_2 - n_1/p_2$.

\subsection{Wastage is nonmonotonic and idiosyncratic}
\label{sec:nonmonotonicity}
We also observe that our model can have nonmonotonic behavior,
with wastage rising and falling in idiosyncratic patterns,
rather than as part of larger trends.
Nonmonotonic behavior is pronounced when $n_2 \gg n_1$,
and muted or nonexistent when $n_1$ is closer to $n_2$.

\begin{figure}
\begin{tabular}{cc}
    (a) \bm{$n_1 = 1, \, n_2 = 67, \, n=201$} &
    (b) \bm{$n_1 = 3, \, n_2 = 10, \, n=30$}\\
    \includegraphics[width=0.4\textwidth]{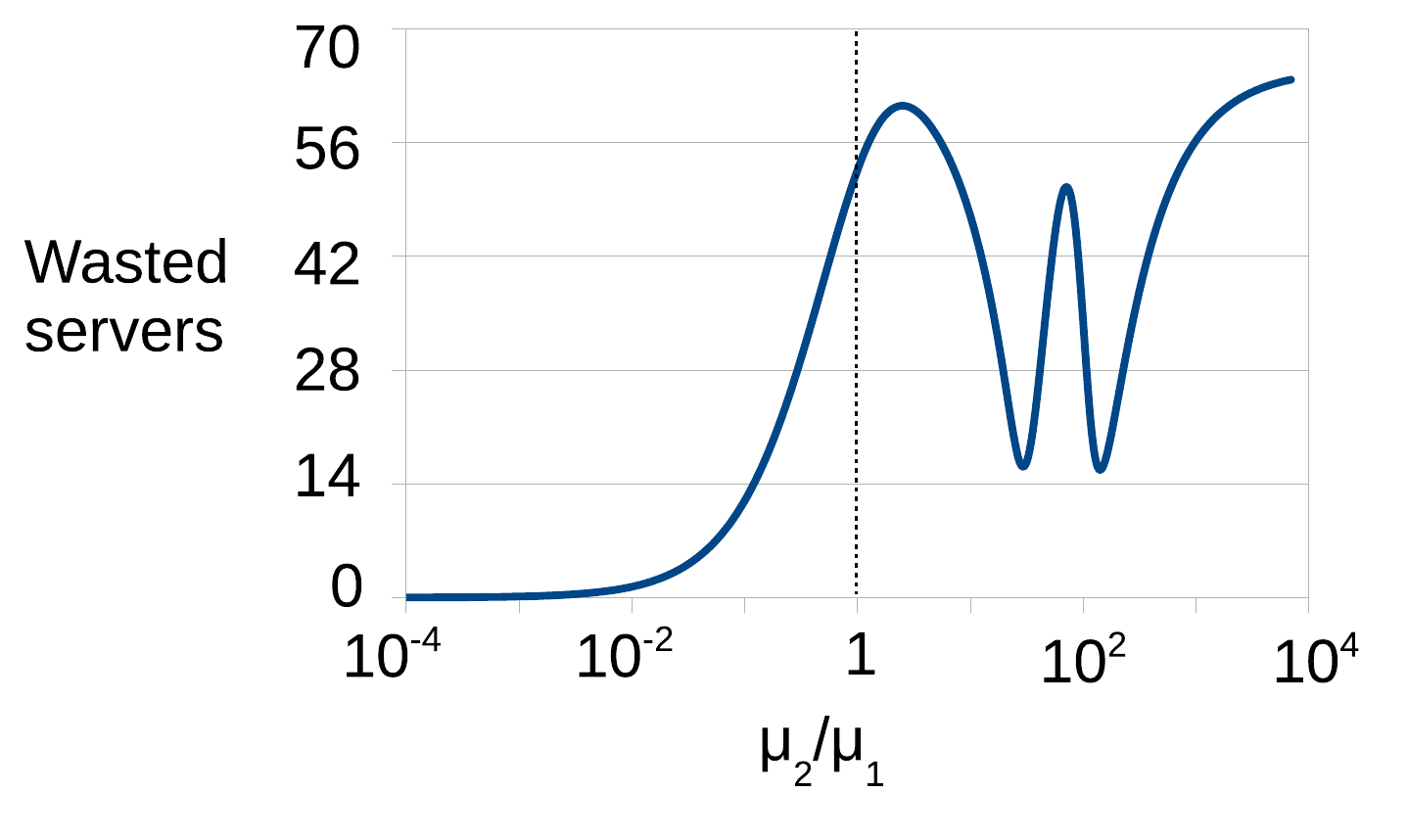} &
	\includegraphics[width=0.4\textwidth]{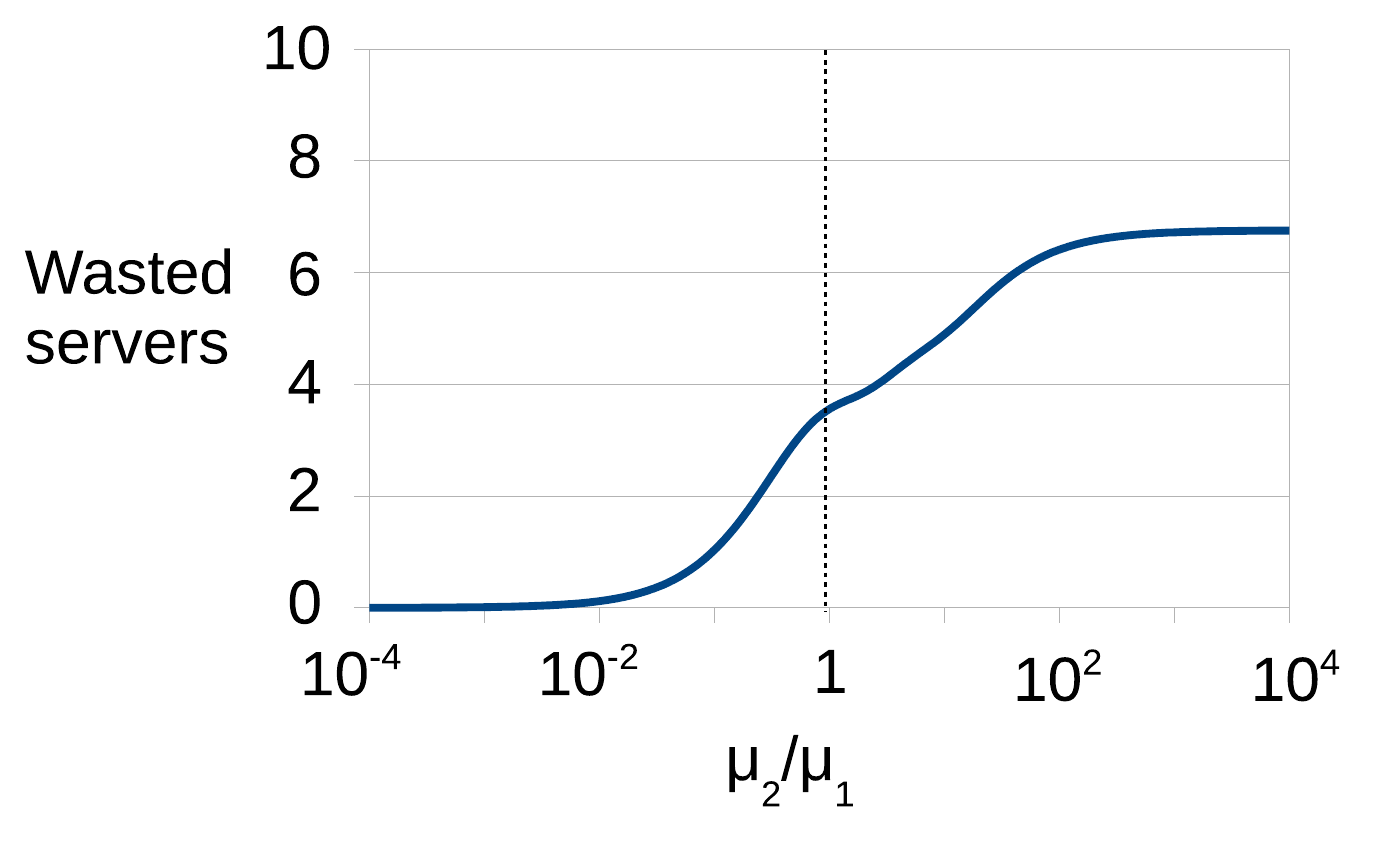}
\end{tabular}
    \vspace{-16pt}
    \caption{
Number of wasted servers in the saturated system in steady state
    as a function of the service rate ratio $\mu_2/\mu_1$,
in systems with $p_1 = p_2 = 0.5$.
    Note that in (a) where $n_2 \gg n_1$,
    wastage is highly nonmonotonic as a function of the service rate ratio.
    In contrast, in (b) where $n_2 \not\gg n_1$,
    wastage is monotonic as a function of the service rate ratio.%
}
    \label{fig:lessons_nonmonotonic}
\end{figure}

For an example of this pattern, compare \cref{fig:lessons_waste}(a)
and \cref{fig:lessons_nonmonotonic}(a).
In both cases, $n/n_2 = 3$, leading to a 3-peaked pattern
with the peaks occurring at similar values of $\mu_2/\mu_1$.
However, in \cref{fig:lessons_nonmonotonic}(a),
where $n_2$ is much larger than $n_1$,
the peaks and troughs are very pronounced,
while in \cref{fig:lessons_waste}(a) they are far more moderate.
In \cref{fig:lessons_nonmonotonic}(b),
$n_1$ and $n_2$ are closer still, and the number of wasted servers
is fully monotonic as a function of $\mu_2/\mu_1$.
We posit that
dips occur when class 1 jobs displace a class 2 job,
leading to significant wastage if $n_2 \gg n_1$.
However,
the number of class 1 jobs is not completely concentrated on a single value.
Rather, the number of class 1 jobs
varies around its mean by about $1/p_2$ jobs,
as discussed in \cref{sec:wastage_falls}.
These extra jobs seem to counterbalance up the changes in wastage that would
otherwise occur when a class 2 job is displaced from service.
This means that if $n_1/p_2$ is on the same order as $n_2$,
the dips tend to get completely covered up, making wastage fully monotonic.

In addition to nonmonotonic behavior displayed when $\mu_2/\mu_1$ is varied,
such behavior also occurs when $p_1$ and $p_2$ are varied, or when $n$ is varied,
to name just a few possibilities.
In general, nonmonotonic behavior is more the rule than the exception in the multiserver-job model.

This pattern of nonmonotonic behavior is very important to understand
when provisioning such a system or studying its workload:
the presence of nonmonotonic behavior means that there are local optima to seek out
and local pessima to be avoided,
not just long-term trends to follow.

\section{Conclusion and Discussion}
\label{sec:discussion}
In multiserver-job systems,
jobs don't always fit neatly into the servers,
and the resulting wastage of servers can be very high and hard to predict.
Because no prior analytical work in the FCFS multiserver-job model
was capable of handling the setting where $\mu_1$ and $\mu_2$ differ,
the important and complex effect of the service rate ratio $\mu_2/\mu_1$
on wastage was not previously understood.

We derive the first analytical formula for the stability region
of a two-class multiserver-job system.
This result allows us provide a detailed examination of the patterns of wastage,
without resorting to noisy and computationally intensive methods such as simulation.

We employ a saturated-system technique to derive our results.
We start by defining and studying the saturated system,
where additional jobs are always present in the queue.
We then derive a product-form steady-state distribution for the saturated system,
in \cref{thm:embedded_saturated,thm:continuous_saturated}.
We finish by proving that the stability region of the original model
equals the throughput of the saturated system, in \cref{thm:stability}.
We also show in \cref{thm:simplify} that we can use our saturated-system technique
to give a more intuitive proof
of the stability region of the model in \cite{rumyantsev_stability}.

Our results focus on the two-class multiserver-job system.
A natural next step is to consider systems with three or more classes with distinct rates.
Unfortunately, in such models the steady state distribution of the saturated system
no longer satisfies our product-form solution.
Moreover, the decomposed stationary equations
used in the proof of \cref{thm:embedded_saturated} (\cref{eq:decompose_1})
no longer hold.
One possible direction of study would be to
approximate a system with three or more classes
by a related system with two classes,
perhaps by preserving the number of servers required by the largest and smallest classes,
or required by the two most common classes.

\bibliographystyle{plainnat}
\bibliography{refs}
\appendix
    \section{Proofs deferred from Section~\ref{sec:embedded}}
    Here we present the proofs of various cases and lemmas
    that were deferred from \cref{sec:embedded}.
    \subsection{Class 1 completions, blocking job}
    \label{sec:class_1_blocking}
    Let the current state be $[1, a, b]$.
    We want to show that the class 1 decomposed stationary equation must hold:
    \[ p_1 \pi_{[1, a, b]} = \sum_{s'} \pi_{s'} P_1(s', [1, a, b]). \]
    To start with, we enumerate the states that can transition to $[1, a, b]$
    via the completion of a class 1 job.

    By inspecting \cref{lem:possible_transitions},
    we can see that transitions to blocking states via class 1 completions
    only occur in cases $ii$ and $vi$,
    corresponding to potential predecessors
    $[0, a+1, b]$ and $[1, a+1, b]$, respectively.
    Only one of these states exists, namely state $s_1(a+1)$.
    Thus, the only possible state that can transition to $[1, a, b]$ via the completion
    of a class 1 job is the state $s_1(a+1)$.

    If state $s_1(a+1)$ is $[1, a+1, b]$,
    the completion of the class 1 job is not followed by any arrivals.
    This transition has probability $f_1([1, a+1, b])$.
    If state $s_1(a+1)$ is $[0, a+1, b]$, the completion of the
    class 1 job is followed by a class 2 job becoming the blocking job. 
    This transition has probability $f_1([0, a+1, b])p_2$.
    For instance, in the 3-10-30 system, if the current state is $[1, 4, 1]$,
    then the prior state could only be $[1, 5, 1]$;
    if the current state is $[1, 2, 2]$,
    then the prior state could only be $[0, 3, 2]$.

    We will prove that the decomposed stationary equation holds both when $s_1(a+1)$
    is $[0, a+1, b]$, and when $s_1(a+1)$ is $[1, a+1, b]$.

    If $s_1(a+1)$ is $[0, a+1, b]$, then we must show that
    \begin{align*}
        p_1 \pi_{[1, a, b]} &= \pi_{[0, a+1, b]} f_1([0, a+1, b]) p_2
        = p_2 \pi_{[0, a+1, b]} f_1(s_1(a+1)).
    \end{align*}
    This follows immediately from the steady state guess,
    by comparing the expressions for states $[1, a, b]$ and $[0, a+1, b]$.

    If $s_1(a+1)$ is $[1, a+1, b]$, then we must show that
    \begin{align*}
        p_1 \pi_{[1, a, b]} &= \pi_{[1, a+1, b]} f_1([1, a+1, b])
        = \pi_{[1, a+1, b]} f_1(s_1(a+1)).
    \end{align*}
    This follows immediately from the steady state guess,
    by comparing the expressions for states $[1, a, b]$ and $[1, a+1, b]$.

    With both scenarios covered, the decomposed stationary equation must hold in this case.

    \subsection{Class 2 completions, blocking job}
    \label{sec:class_2_blocking}
    Let the current state be $[1, a, b]$.
    We want to show that the class 2 decomposed stationary equation must hold:
    \[ p_2 \pi_{[1, a, b]} = \sum_{s'} \pi_{s'} P_2(s', [1, a, b]) \]
    To start with, we enumerate the states that can transition to $[1, a, b]$
    via the completion of a class 2 job.
    
    By inspecting \cref{lem:possible_transitions},
    we can see that transitions to blocking states via class 2 completions
    only occur in cases $v$ and $viii$.
    Case $v$ corresponds to the predecessor state $s_2(b+1)$.
    Let us write $s_2(b+1)$ as $[0, a-i^*, b+1]$.
    Then case $viii$ corresponds to the set of possible predecessor states
    $[1, a-i, b]$ where $0 \le i < i^*$.
    As a result,
    these states are all the possible states that could transition to $[1, a, b]$
    after a class 2 job completes.

    The state $s_2(b+1)$ could transition to $[1, a, b]$
    if a class 2 job completed,
    then a series of class 1 jobs arrived until $a$ were in service,
    then a class 2 job arrived to block the head of the queue.
    This transition has probability $f_2([0, a-i^*, b+1]) p_1^{i^*} p_2$.
    For instance, in the 3-10-30 system,
    if the current state is $[1, 5, 1]$,
    then the possible prior state is $s_2(2) = [0, 3, 2]$, and $i^*$ is 2.

    Another set of possible prior states are the states $[1, a-i, b]$
    for any $i$ such that $0 \le i < i^*$,
    including the state $[1, a, b]$ itself.
    These states could transition to $[1, a, b]$ if a class 2 job completed,
    the class 2 job blocking the head of the queue entered service,
    then $i$ more class 1 jobs entered service from the queue,
    and finally a class 2 job arrived to block the head of the queue once more.
    This transition has probability $f_2([1, a-i, b]) p_1^i p_2$.
    For instance, in the 3-10-30 system,
    if the current state is $[1, 5, 1]$,
    then the possible prior states in this category are $[1, 4, 1]$ and $[1, 5, 1]$.

    To prove the decomposed stationary equation, we must show that
    \begin{align*}
        p_2 \pi_{[1, a, b]} &= \pi_{[0, a-i^*, b+1]} f_2([0, a-i^*, b+1]) p_1^{i^*} p_2
        + \sum_{i=0}^{i^*-1} \pi_{[1, a-i, b]}f_2([1, a-i, b]) p_1^i p_2\\
        \iff \pi_{[1, a, b]} &= \pi_{[0, a-i^*, b+1]} f_2([0, a-i^*, b+1]) p_1^{i^*}
                + \sum_{i=0}^{i^*-1} \pi_{[1, a-i, b]}f_2([1, a-i, b]) p_1^i
    \end{align*}
    Applying \cref{lem:telescope} with $q=0$ and $r=i^*-1$,
    our desired statement simplifies to
    \begin{align*}
        \pi_{[1, a, b]} = p_1^{i^*} \pi_{[0, a-i^*, b+1]} f_2([0, a-i^*, b+1])
        &+ \pi_{[1, a, b]} - p_1^{i^*-1} \pi_{[1, a-i^*+1, b]} f_1([1, a-i^*+1,b])\\
        \iff p_1^{i^*} \pi_{[0, a-i^*, b+1]} f_2([0, a-i^*, b+1])
        &= p_1^{i^*-1} \pi_{[1, a-i^*+1, b]} f_1([1, a-i^*+1,b])\\
        \iff p_1 \pi_{[0, a-i^*, b+1]} f_2([0, a-i^*, b+1])
        &= \pi_{[1, a-i^*+1, b]} f_1([1, a-i^*+1,b])
    \end{align*}
    This follows immediately from the steady state guess, by comparing the expressions
    for states $[0, a-i^*, b+1]$ and $[1, a-i^*+1, b]$.

    Thus, the decomposed stationary equation holds in this case.
    \subsection{Edge cases}
    \label{sec:edge_cases}
    In \cref{sec:class_1_no_blocking,sec:class_2_no_blocking,sec:class_1_blocking,sec:class_2_blocking},
    we verified that the decomposed stationary equations held,
    but in doing so we assumed that certain states existed.
    In particular, in \cref{sec:class_1_no_blocking,sec:class_1_blocking}
    (for class 1 completions)
    we assumed that state $s_1(a+1)$ existed,
    and in \cref{sec:class_2_no_blocking,sec:class_2_blocking}
    (for class 2 completions)
    we assumed that state $s_2(b+1)$ existed.
    In certain states, these neighboring states do not exist,
    so we verify that the decomposed stationary equations still hold here.

    We will split up this section to handle distinct kinds of edge cases:
    states where $s_1(a+1)$ does not exist,
    and states where $s_2(b+1)$ does not exist.
    We will further subdivide the latter case
    by whether any class 1 jobs are present in the system.

    Note that a state where $s_1(a+1)$ does not exist
    is a state containing the maximum possible number of class 1 jobs,
    and similarly for states where $s_2(b+1)$ does not exist.

    \subsubsection{Maximum possible number of class 1 jobs}

    If the state $s_1(a+1)$ does not exist,
    this means that $a+1$ class 1 jobs cannot fit in the server.
    Therefore, $a$ class 1 jobs must completely fill the server.
    In particular, this means that $s_1(a)$ is $[0, a, 0]$.
    For instance, in the 3-10-30 system,
    this edge case occurs for state $[0, 10, 0]$.
    There is no state $s_1(11)$.

    We must show that the class 1 decomposed stationary equation
    holds for the state $[0, a, 0]$.
    
    For a state to transition to $[0, a, 0]$ on a class 1 completion,
    there must be no class 2 jobs present in the prior state.
    The only state with no class 2 jobs present is $[0, a, 0]$ itself.
    This transition has probability $f_1([0, a, 0]) p_1$.

    To verify the class 1 decomposed stationary equation in this edge case,
    we must show that
    \[ p_1 \pi_{[0, a, 0]} = \pi_{[0, a, 0]} p_1 f_1([0, a, 0]). \]
    Note that in state $[0, a, 0]$, a class 1 job is guaranteed to complete next.
    In other words, $f_1([0, a, 0]) = 1$.
    Therefore, the decomposed stationary equation holds.

    \subsubsection{Maximum possible number of class 2 jobs}
    Next, we consider the case where the state $s_2(b+1)$ does not exist.
    This means that $b+1$ class 2 jobs cannot fit in the server.

    In a given system, there can be one or more states in this edge case.
    For instance, in the 3-10-30 system, this edge case occurs for state $[0, 0, 3]$.
    There is no state $s_2(4)$.

    For another example, in a different system with $n_1 = 1$, $n_2 = 4$, $n=7$,
    this edge case occurs for states $[0, 3, 1]$, $[1, 2, 1]$, $[1, 1, 1]$,
    and $[1, 0, 1]$. There is no state $s_2(2)$ in that system.

    Note that there always exists a state with no class 1 jobs present
    among the states in this edge case.
    We will handle the state without any class 1 jobs separately from
    the other states in this case.

    \subsubsection{No class 1 jobs}
    
    Call the state with no class 1 jobs $[h, 0, b]$.
    For a state to transition to $[h, 0, b]$ on a class 2 completion,
    there must be no class 1 jobs present in the prior state.
    The only state with no class 1 jobs present is $[h, 0, b]$
    itself.
    This transition has probability $f_2([h, 0, b])p_2$.
    To verify the class 2 decomposed stationary equation in this edge case,
    we must show that
    \[ p_2 \pi_{[h, 0, b]} = \pi_{[h, 0, b]} f_2([h, 0, b]) p_2 \]
    Note that in state $[h, 0, b]$,
    a class 2 job is guaranteed to complete next.
    In other words, $f_2([h, 0, b]) = 1$.
    Therefore, the decomposed stationary equation holds.

    \subsubsection{At least one class 1 job}
    Next, we consider the case where state $s_2(b+1)$ does not exist,
    but there is at least one class 1 job in the system.

    In the system with $n_1 = 1, n_2 = 4, n =7$,
    these are the states $[0, 3, 1]$, $[1, 2, 1]$, and $[1, 1, 1]$.

    We will first consider states in this edge case that do not have a blocking job.
    Let the state be $[0, a, b]$.
    For a state to transition to $[0, a, b]$ on a class 2 completion,
    the prior state must have at most $a$ class 1 jobs.
    This means it must have at least $b$ class 2 jobs.
    Since $s_2(b+1)$ does not exist, the prior state must have exactly $b$ class 2 jobs.

    The prior state can be $[0, a, b]$ itself,
    if a class 2 job completes, and then a class 2 job arrives.
    This transition happens with probability $f_2([0, a, b]) p_2$.
    The prior state can also be a state of the form $[1, a-i, b]$,
    where $1 \le i \le a$.
    This transition can happen if a class 2 job completes,
    then $i$ class 1 jobs arrive.
    This transition happens with probability $f_2([1, a-i, b]) p_1^i$.
    Since $a \ge 1$, there is at least one such state.

    To verify the class 2 decomposed stationary equation in this case,
    we must show that
    \begin{align*}
        p_2 \pi_{[0, a, b]} &= \pi_{[0, a, b]} f_2([0, a, b]) p_2
        + \sum_{i=1}^a \pi_{[1, a-i, b]} f_2([1, a-i, b]) p_1^i \\
        \iff p_2 \pi_{[0, a, b]} f_1([0, a, b])
        &= \sum_{i=1}^a \pi_{[1, a-i, b]} f_2([1, a-i, b]) p_1^i
    \end{align*}
    Applying \cref{lem:telescope} with $q=1$ and $r=a$,
    our desired statement simplifies to
    \begin{align*}
        p_2 \pi_{[0, a, b]} f_1([0, a, b]) =
        p_1 \pi_{[1, a-1, b]} - p_1^a \pi_{[1, 0, b]} f_1([1, 0, b])
    \end{align*}
    Since $[1, 0, b]$ has no class 1 jobs present,
    $f_1([1, 0, b]) = 0$.
    Thus, our desired statement simplifies to
    \begin{align*}
        p_2 \pi_{[0, a, b]} f_1([0, a, b]) = p_1 \pi_{[1, a-1, b]}
    \end{align*}
    This follows immediately from the steady state guess,
    by comparing the expressions for states $[0, a, b]$ and $[1, a-1, b]$.

    Finally, we consider the case where state $s_2(b+1)$ does not exist,
    there is at least one class 1 job in the system, and there is a blocking job
    at the head of the queue.
    Let the state be $[1, a, b]$.
    For a state to transition to $[1, a, b]$ on a class 2 completion,
    the prior state must have at most $a$ class 1 jobs.
    This means it must have at least $b$ class 2 jobs.
    Since $s_2(b+1)$ does not exist, the prior state must have exactly $b$ class 2 jobs.

    The prior state can be any state of the form $[1, a-i, b]$,
    where $0 \le i \le a$.
    This transition can happen if a class 2 job completes,
    the blocking job enters service,
    $i$ class 1 jobs arrive,
    and then a class 2 job arrives to block the servers again.
    This transition happens with probability $f_2([1, a-i, b]) p_1^i p_2$.

    To verify the class 2 decomposed stationary equation in this case,
    we must show that
    \begin{align*}
        p_2 \pi_{[1, a, b]} &= \sum_{i=0}^a \pi_{[1, a-i, b]} f_2([1, a-i, b]) p_1^i p_2\\
        \iff \pi_{[1, a, b]} &= \sum_{i=0}^a \pi_{[1, a-i, b]} f_2([1, a-i, b]) p_1^i
    \end{align*}
    Applying \cref{lem:telescope} with $q=0$ and $r=a$,
    our desired statement simplifies to
    \begin{align*}
        \pi_{[1, a, b]} = \pi_{[1, a, b]} - p_1^a \pi_{[1, 0, b]} f_1([1, 0, b])
    \end{align*}
    Since $f_1([1, 0, b]) = 0$, the statement must hold.

    We have verified that the decomposed stationary equations hold in all cases.
\subsection{Proof of Lemma~\ref{lem:telescope}} 
\label{sec:proof_telescope}
    \begin{replemma}{lem:telescope}
        For all $a, b$ and all $q \le r$ such that both $[1, a-q, b]$
        and $[1, a-r, b]$ are valid states,
        under the steady state guess in \cref{eq:steady_state_guess},
        \[
            \sum_{i=q}^r p_1^i \pi_{[1, a-i, b]} f_2([1, a-i, b])
            = p_1^q \pi_{[1, a-q, b]} - p_1^r \pi_{[1, a-r, b]}f_1([1, a-r, b])
        \]
    \end{replemma}
    \begin{proof}
        We shall proceed by induction on $r$, for any given values of $a, b,$ and $q$.

        First, note that the base case $r=q$ merely states that
        \[p_1^q \pi_{[1, a-q, b]} f_2([1, a-q, b])
        = p_1^q \pi_{[1, a-q, b]} - p_1^q \pi_{[1, a-q, b]} f_1([1, a-q, b])
        \]
        This is true from the definitions of $f_1$ and $f_2$ in \cref{eq:def_f}.

        For the inductive case, let us assume that
        \begin{align}
            \label{eq:step_jm1}
            \sum_{i=q}^{r-1} p_1^i \pi_{[1, a-i, b]} f_2([1, a-i, b])
            = p_1^q \pi_{[1, a-q, b]} - p_1^{r-1} \pi_{[1, a-(r-1), b]}f_1([1, a-(r-1), b])
        \end{align}
        Next, note that by comparing the steady state guess for the states $[1, a-r, b]$
        and $[1, a-(r-1), b] = [1, a-r+1, b]$,
        we find that
        \begin{align*}
            \pi_{[1, a-r, b]} p_1 &= \pi_{[1, a-r+1, b]} f_1(s_1(a-r+1)) 
            = \pi_{[1, a-r+1, b]} f_1([1, a-r+1, b])
        \end{align*}
        Performing this substitution into \cref{eq:step_jm1},
        we find that
        \begin{align*}
            \sum_{i=q}^{r-1} p_1^i \pi_{[1, a-i, b]} f_2([1, a-i, b])
            = p_1^q \pi_{[1, a-q, b]} - p_1^r \pi_{[1, a-r, b]}
        \end{align*}
        Next, we add $p_1^r \pi_{[1, a-r, b]}  f_2([1, a-r, b])$ to both sides of the equation,
        giving
        \begin{align*}
            \sum_{i=q}^r p_1^i \pi_{[1, a-i, b]} f_2([1, a-i, b])
            = p_1^q \pi_{[1, a-q, b]} - p_1^r \pi_{[1, a-r, b]} f_1([1, a-r, b])
        \end{align*}
        Thus, the inductive case holds.
    \end{proof}

\section{Proof of Theorem~\ref{thm:continuous_saturated}}
\label{sec:saturated_proof}
\begin{reptheorem}{thm:continuous_saturated}
    The steady state distribution of the saturated system is:
    \[ \p_{[h, a, b]} = X \frac{\pi_{[h, a, b]}}{a \mu_1 + b \mu_2}\]
    where $X$, the normalizing constant,
    is the throughput of the saturated system:
    \[ X = \left( \sum_{[h, a, b]} \frac{\pi_{[h, a, b]}}{a \mu_1 + b \mu_2} \right)^{-1}.\]
\end{reptheorem}
\begin{proof}
    From \cref{thm:embedded_saturated},
    we know the steady state probability $\pi_{s}$
    that the embedded DTMC of the saturated system is in a given state $s$.
    In particular, we know that the distribution $\pi$ satisfies the
    balance equations for the embedded DTMC for all states $s$:
    \begin{align}
        \label{eq:old_steady_state}
        \pi_s = \sum_{s'} \pi_{s'} P(s', s)
    \end{align}
    where $P(s, s')$ denotes the probability of a transition from
    $s$ to $s'$.

    To show that a distribution $\p_s$ is the steady state
    of the continuous-time saturated system,
    we must show that for all states $s$,
    \begin{align}
        \label{eq:steady_state}
        \p_s \nu_s = \sum_{s'} \p_{s'} \nu_{s'} P(s', s),
    \end{align}
    where $\nu_s$ denotes the rate of transitions out of state $s$.
    In particular, let $\p$ denote the distribution
    \[
        \p_s = C' \frac{\pi_s}{\nu_s}
    \]
    for a normalization constant $C'$.
    Then, due to \cref{eq:old_steady_state},
    one can easily see that $\p$ satisfies \cref{eq:steady_state},
    and so $\p$ is the steady state distribution
    of the continuous-time saturated system.

    Note that in a given state $[h, a, b]$
    with $a$ class 1 jobs in service and $b$ class 2 jobs in service,
    the rate of transitions out of $[h, a, b]$
    is equal to the completion rate:
    \[
        \nu_{[h, a, b]} = a \mu_1 + b \mu_2
    \]
    This relationship holds because a transition occurs on each completion
    in the saturated system.
    Recall that we consider self-transitions to be transitions.

    Thus, the steady state distribution $\p$ is
    \[
        \p_{[h, a, b]} = C' \frac{\pi_{[h, a, b]}}{a \mu_1 + b \mu_2} \text{ where }
        C' = \left( \sum_{[h,a,b]} \frac{\pi_{[h, a, b]}}{a \mu_1 + b \mu_2} \right)^{-1},
    \]
    as desired.

    To better understand the value of $C'$,
    note that the time between departures in the continuous-time saturated system is
    distributed as $\Exp(a \mu_1 + b \mu_2)$ with probability $\pi_{[h, a, b]}$.
    Thus, the mean time between departures is
    \[
        E[\text{time between departures}] = \sum_{[h,a,b]} \frac{\pi_{[h, a, b]}}{a \mu_1 + b \mu_2}.
    \]
    The system throughput $X$ must be equal to the reciprocal of the mean interdeparture time,
    matching $C'$, as desired.

\end{proof}

\section{Proof of Lemma~\ref{lem:foster}}
\label{app:foster}
    \begin{replemma}{lem:foster}
        Let $V$ be Lyapunov function which maps each state in the
        embedded Markov chain of the Augmented Saturated System
        to the value of its jobs counter.
        Then $V$ satisfies the conditions of Foster's theorem,
        showing that the embedded chain is positive recurrent.
    \end{replemma}
    \begin{proof}

    Foster's theorem considers an irreducible Markov chain
    on a countable state space $S$ with transition probabilities $P_{ij}$
    for $i, j \in S$.
    Foster's theorem states that the Markov chain is positive recurrent if
    there exists a Lyapunov function $V : S \to \mathbb{R}$
    and a finite set $F$ such that
    \begin{enumerate}
    \item $V(i) \ge 0$ for all $i \in S$,
    \item $\sum_{j \in S} P_{ij} V(j) < \infty$ for all $i \in F$, and
    \item $\sum_{j \in S} P_{ij} V(j) \le V(i) - \epsilon$ for all $i \not\in F$,
    for some $\epsilon > 0$.
    \end{enumerate}

    We will use the value of the jobs counter as the Lyapunov function $V$.
    The value of the jobs counter is always non-negative, satisfying the first property.
    As for the other properties,
    note that the change in the value of the jobs counter over some interval
    is equal to the number of arrivals during the interval
    minus the number of completions during the interval,
    except when the value of the jobs counter has already reached zero
    and further completions occur.
    Let $C_0(t, i)$ denote the number of completions that occur while
    the jobs counter has reached zero over an interval of length $t$,
    starting from state $i$.
    Let $J(t, i)$ denote the change in the value of the jobs function
    over an interval of length $t$, starting from state $i$.
    Then we can symbolically state that
    \[ J(t, i) = A(t, i) - (C(t, i) - C_0(t, i)) \]
    We wish to bound $E[J(t_1, i)]$,
    in order to show that the requirements of Foster's theorem are satisfied.
    We know that
    \[ E[J(t_1, i)] = \sum_{j \in S} P_{ij} (V(j) - V(i)). \]

    First, let us consider the second requirement of Foster's theorem.
    Regardless of the choice of finite set $F$,
    there must be some finite upper bound on the value of the jobs counter
    for jobs in $F$.
    Then, after one step, the expected value of the jobs counter
    increases by at most
    \[E[J(t_1, i)] \le E[A(t_1, i)] = \lambda t_1.\]
    The result must be finite,
    so the second requirement of Foster's theorem is satisfied,
    regardless of the choice of $F$.
    
    As for the third requirement,
    note that if the initial value of the jobs counter is very large,
    specifically much larger than $\mu_1 t_1$ and $\mu_2 t_1$,
    there is a very low probability that the
    jobs counter will reach zero during the next $t_1$ time,
    and $E[C_0(t_1, i)]$ over that interval will be very small.
    In particular, there must exist some $c$ such that for any state $i'$
    with jobs counter at least $c$, the expected number of jobs
    completed while the counter is zero
    is at most $0.1$.
    For any such state $i'$, the expected change in the value of the jobs counter can be bounded:
    \begin{align*}
    &E[A(t_1, i')] - E[C(t_1, i')] + E[C_0(t_1, i')]\\
    &\le \lambda t_1 - (\lambda t_1 + 1) + 0.1 = -0.9
    \end{align*}
    Therefore, let us define $F$ to be the finite set of states with jobs counter
    less than $c$.
    By the above argument,
    the third requirement of Foster's theorem is satisfied with $\epsilon = 0.9$.
    
    With all three requirements satisfied,
    Foster's theorem tells us that the embedded Markov Chain
    of the Augmented Saturated System must be positive recurrent.
\end{proof}

\section{Proof of Claim~\ref{clm:ratio}} 
\label{app:ratio}
\begin{repclaim}{clm:ratio}
    For a given ratio $\mu_2/\mu_1$,
    and for given values of $n_1, n_2,$ and $n$,
    the long-term average number of wasted servers in the saturated system
    is not dependent on the specific values of $\mu_1$ and $\mu_2$.
\end{repclaim}
\begin{proof}
    We prove this claim in two ways:
    using \cref{thm:embedded_saturated,thm:continuous_saturated},
    and directly via a time-scaling argument.

    For the first proof,
    let us write out the steady-state expected number of wasted servers.
    First, note that in a given state $[h, a, b]$,
    the number of wasted servers is $n - a n_1 - b n_2$.
    Therefore, using the result of \cref{thm:continuous_saturated},
    the expected number of wasted servers in steady-state is
    \[ E[\text{number of wasted servers}]
    = X \sum_{[h, a, b]} \frac{\pi_{[h, a, b]} (n - a n_1 - b n_2)}{a \mu_1 + b \mu_2} \]
    where $X$ is the throughput:
    \begin{align}
        \label{eq:throughput}
        X  = \left( \sum_{[h,a,b]} \frac{\pi_{[h,a,b]}}{a \mu_1 + b \mu_2} \right)^{-1}
    \end{align}

    There are four components to this formula:
    the steady state of the embedded saturated system, $\pi_{[h, a, b]}$,
    the number of wasted servers, $n - a n_1 - b n_2$,
    the completion rate, $a \mu_1 + b \mu_2$,
    and the throughput, $X$.

    First, we will show that $\pi_{[h, a, b]}$ is only dependent on the ratio $\mu_2/\mu_1$,
    not the specific values of $\mu_1$ and $\mu_2$.
    Examining \cref{eq:steady_state_guess} in \cref{thm:embedded_saturated},
    we can see that $\pi_{[h, a, b]}$ depends on $p_1$ and $p_2$,
    as well as $f_1(\cdot)$ and $f_2(\cdot)$,
    the probabilities that the next job to complete from a given state
    is either a class 1 or class 2 job, respectively.
    Note that by the definition of $f_1(\cdot)$ and $f_2(\cdot)$
    in \cref{eq:def_f},
    $f_1(\cdot)$ and $f_2(\cdot)$ are only dependent on the ratio
    $\mu_2/\mu_1$, not the specific values of $\mu_1$ and $\mu_2$.
    As a result, the same is true of $\pi_{[h, a, b]}$.

    Next, consider the throughput $X$, given by \cref{eq:throughput}.
    Because $\pi_{[h, a, b]}$ is only dependent on the ratio $\mu_2/\mu_1$,
    if we scale $\mu_1$ and $\mu_2$ by the same constant,
    $X$ is scaled by that constant as well.
    As a result, the ratio
    \[ \frac{X}{a\mu_1 + b \mu_2} \]
    is only dependent on the ratio $\mu_2/\mu_1$
    for any given $a$ and $b$.

    Combining everything together,
    we find that each term of the form
    \[ X \frac{\pi_{[h, a, b]} (n - a n_1 - b n_2)}{a \mu_1 + b \mu_2} \]
    is only dependent on the ratio $\mu_2/\mu_1$.
    Thus, the expected number of wasted servers in steady-state
    is likewise only dependent on the ratio $\mu_2/\mu_1$, as desired.

    For the second proof, note that increasing $\mu_1$ and $\mu_2$
    by some multiplicative factor $c$
    is equivalent to speeding up the passage of time by a factor of $c$,
    because $\mu_1$ and $\mu_2$ are the only time-dependent quantities
    in the saturated system.
    Speeding up the passage of time by a factor of $c$ cannot change
    the average number of wasted servers in the saturated system,
    because the same steady state will be observed.
\end{proof}
\end{document}